\documentclass[twocolumn,prx,superscriptaddress]{revtex4}
\usepackage[latin9]{inputenc}
\setcounter{secnumdepth}{3}
\usepackage{color}
\usepackage{amsmath}
\usepackage{amssymb}
\usepackage{graphicx}
\usepackage{float}
\usepackage{algorithm}
\usepackage{algorithmic}
\usepackage{url}
\usepackage{physics}
\usepackage[most]{tcolorbox}
\usepackage{amsthm}
\usepackage{bm}
\usepackage{bbm}
\usepackage{layout}
\usepackage{amsfonts}
\usepackage{amssymb}
\usepackage{mathtools}
\theoremstyle{definition}
\newtheorem{theorem}{Theorem}

\newtheorem{lemma}{Lemma}

\renewcommand{\O}[1]{O\left(#1\right)}
\renewcommand{\Re}{\text{Re}}

\usepackage[unicode=true,
 bookmarks=true,bookmarksnumbered=false,bookmarksopen=false,
 breaklinks=false,pdfborder={0 0 1},backref=false,colorlinks=true]
 {hyperref}
\hypersetup{
 linkcolor=magenta, urlcolor=blue, citecolor=blue, pdfstartview={FitH}, hyperfootnotes=false, unicode=true}

\usepackage[capitalise,compress]{cleveref}
\makeatletter
%%%%%%%%%%%%%%%%%%%%%%%%%%%%%% Textclass specific LaTeX commands.
\@ifundefined{textcolor}{}
{%
 \definecolor{BLACK}{gray}{0}
 \definecolor{WHITE}{gray}{1}
 \definecolor{RED}{rgb}{1,0,0}
 \definecolor{GREEN}{rgb}{0,1,0}
 \definecolor{BLUE}{rgb}{0,0,1}
 \definecolor{CYAN}{cmyk}{1,0,0,0}
 \definecolor{MAGENTA}{cmyk}{0,1,0,0}
 \definecolor{YELLOW}{cmyk}{0,0,1,0}
}

%%%%%%%%%%%%%%%%%%%%%%%%%%%%%% User specified LaTeX commands.

\usepackage{amsfonts}
\usepackage{tabularx}
\usepackage{dcolumn}
\usepackage{bm}
\usepackage{epstopdf}
\usepackage{times}

\setcounter{MaxMatrixCols}{10}
\hypersetup{urlcolor=blue}

\renewcommand{\O}[1]{O\left(#1\right)}

\makeatother

\begin{document}

\title{Improved Digital Quantum Simulation by Non-Unitary Channels}

\author{W. Gong}
\affiliation{School of Engineering and Applied Sciences, Harvard University, Cambridge, Massachusetts 02134, USA}
\author{Yaroslav Kharkov}
\thanks{Currently at AWS Quantum Technologies}
\affiliation{Joint Center for Quantum Information and Computer Science, NIST/University of Maryland, College Park, Maryland 20742, USA}
\affiliation{Joint Quantum Institute, NIST/University of Maryland, College Park, Maryland 20742, USA}
\author{Minh C. Tran}
\affiliation{Joint Center for Quantum Information and Computer Science, NIST/University of Maryland, College Park, Maryland 20742, USA}
\affiliation{Joint Quantum Institute, NIST/University of Maryland, College Park, Maryland 20742, USA}

\author{Przemyslaw Bienias}
\affiliation{Joint Center for Quantum Information and Computer Science, NIST/University of Maryland, College Park, Maryland 20742, USA}
\affiliation{Joint Quantum Institute, NIST/University of Maryland, College Park, Maryland 20742, USA}
\author{Alexey V. Gorshkov}
\affiliation{Joint Center for Quantum Information and Computer Science, NIST/University of Maryland, College Park, Maryland 20742, USA}
\affiliation{Joint Quantum Institute, NIST/University of Maryland, College Park, Maryland 20742, USA}

\begin{abstract}
Simulating quantum systems is one of the most promising avenues to harness the computational power of quantum computers. However, hardware errors in noisy near-term devices remain a major obstacle for applications.   
Ideas based on the randomization of Suzuki-Trotter product formulas have been shown to be a powerful approach to reducing the errors of quantum simulation and lowering the gate count. 
In this paper, we study the performance of non-unitary simulation channels and consider the error structure of channels constructed from a weighted average of unitary circuits. 
We show that averaging over just a few simulation circuits can significantly reduce the Trotterization error for both single-step short-time and multi-step long-time simulations. 
We focus our analysis on two approaches for constructing circuit ensembles for averaging: (i) permuting the order of the terms in the Hamiltonian and (ii) applying a set of global symmetry transformations.
We compare our analytical error bounds to empirical performance and show that empirical error reduction surpasses our analytical estimates in most cases. 
Finally, we test our method on an IonQ trapped-ion quantum computer accessed via the Amazon Braket cloud platform, and benchmark the performance of the averaging approach.
\end{abstract}

\maketitle

\section{Introduction}

Near-term quantum computational devices possess the potential to simulate the dynamics of many-body quantum systems beyond the capability of classical computers~\cite{kim2023evidence}. 
The two main approaches for simulating quantum dynamics are (i) analog Hamiltonian simulation and (ii) digital quantum simulation.  
In digital quantum simulation, the physical system is usually mapped to qubits, and quantum gates are used to approximate the desired dynamics.
The most common class of algorithms are product formulas, including the Suzuki-Trotter formulas~\cite{Suzuki1985,Suzuki1990Fractal,Suzuki1991General,Huyghebaert1990,ChildsY19,Childs2004, BerryACS07,ChildsMNRS2017}, which approximate the evolution unitary by splitting the Hamiltonian into a sum of non-commuting terms and
evolving the quantum state with each term in a sequence of small time intervals.
More advanced algorithms, such as truncated Taylor series~\cite{BerryCCKS2015}, multiproduct formulas~\cite{Low19,zhuk2023trotter}, and methods based on quantum signal processing~\cite{LowC2017,Low2019Hamiltonian}, have also been developed in recent years. 
Nevertheless, product formulas remain popular, especially for near-term devices~\cite{Blatt2012Quantum,Monroe2021Programmable,Kjaergaard2020Superconducting,Weimer2010Rydberg,Weimer2011Digital,Lesanovsky2012Liquid,Schonleber2015Quantum,Bloch2012Quantum}, due to their relative implementation simplicity.   

\begin{figure}
    \centering
    \includegraphics[width=0.49\textwidth]{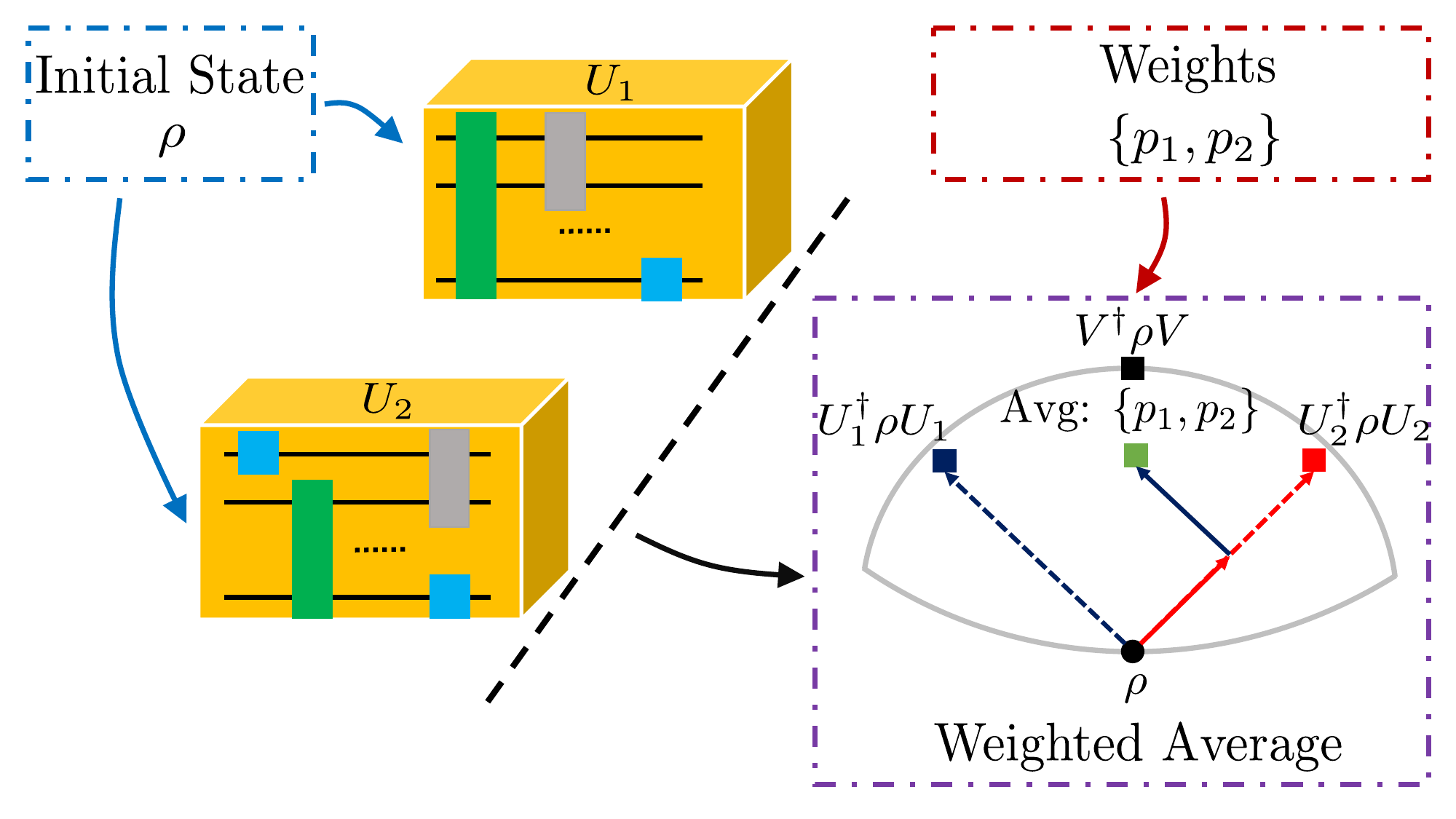}
    \caption{An illustration of the weighted average over unitary simulations and the intuition for error reduction. We prepare $M$ unitary circuits $U_1,U_2,...,U_M$ (each aiming to approximate the ideal evolution $V=e^{-iHt}$) and specify a set of weight coefficients $\{p_1,...,p_M\}$ for each circuit (the figure shows the case $M = 2$). 
    The initial quantum state $\rho$ is evolved under the application of individual unitary circuits $U_m$. In order to obtain the final measurement result, a weighted average of the measurement outputs is computed $\langle O \rangle_{p} = \sum_{m=1}^M p_m \Tr [O U^\dag_m \rho U_m]$, where $O$ is the observable of interest. 
    The average of simulations reduces the simulation error due to a partial cancellation of individual error terms.
    }
    \label{fig:Illu}
\end{figure} 

Recently, a new class of quantum simulation algorithms based on randomization has been proposed~\cite{Campbell2019Random,Childs2019Faster,Chen2021ConcentrationPRX,Hastings2017Turning,Cho2022Doubling,Ouyang2020Compilation,Faehrmann2022Randomizing,Nakaji2023Qswift}.
In contrast to coherent algorithms that produce unitary quantum circuits, these randomized algorithms generally result in non-unitary simulation channels by randomly choosing a product formula in each time step.
These non-unitary channels have better asymptotic scaling with certain Hamiltonian parameters than deterministic algorithms.

In this paper, we study the efficiency of non-unitary simulation channels in various non-asymptotic scenarios. 
Specifically, we consider non-unitary simulation channels (NUSCs) constructed from weighted averages of unitary simulation circuits (as shown in Fig.~\ref{fig:Illu}). 
We provide an estimate for the error structure of NUSCs and demonstrate how NUSCs can significantly reduce the simulation error in short-time and long-time simulations.
Here, the NUSCs for long-time simulation are constructed by repeatedly using the same circuit block for each time step, which is different from previous randomized simulation approaches~\cite{Childs2019Faster,Campbell2019Random,Chen2021ConcentrationPRX,Low2022Complexity}.
We consider a few methods for constructing contributing circuits and calculate the optimal distribution of weight coefficients of NUSCs in some cases. 
We demonstrate the performance of NUSCs for digital quantum simulation of physically relevant systems both numerically and experimentally on the IonQ Harmony device~\cite{Wright2019Benchmarking} via the Amazon Braket cloud platform~\cite{AmazonBraket}.

The rest of the paper is organized as follows. 
In \Cref{sec:GenFrame}, we introduce the general construction of NUSCs for digital quantum simulation and express the error of NUSCs in terms of the errors of the contributing unitary simulation circuits. 
In \Cref{sec:MixChanCon}, we provide two constructions for NUSCs and evaluate their performance both analytically and numerically. Both constructions average over simulation circuits based on Suzuki-Trotter product formulas. In the first construction (\cref{sec:TermOrdering}), the different circuits are obtained by permuting the order of the Hamiltonian terms. In the second construction (\cref{sec:TermOrdering}), the different circuits are obtained by symmetry transformations. 
We derive analytical representations for and bounds on the error for each construction of NUSCs. We benchmark the technique by applying it to a nearest-neighbor XY spin chain and to a Heisenberg chain with power-law interactions.
In \Cref{sec:BraketExp}, we benchmark the performance of NUSCs experimentally on an IonQ trapped-ion quantum computer. In \Cref{sec:outlook}, we provide the summary and the outlook. In the Appendices, we provide details omitted from the main text.

\section{The General framework}\label{sec:GenFrame}
We consider an $n$-qubit state $\rho$ and the task of simulating on a quantum computer the evolution $\mathcal{V}(\rho) = V \rho V^\dagger$, where $V = e^{-iHt}$ describes the time evolution under Hamiltonian $H$ for some time $t$. A quantum simulation of the evolution $\mathcal{V}(\cdot)$ is a quantum channel $\mathcal{U}(\cdot)$ that approximates the target evolution for any quantum state or a particular quantum state $\rho$. Assume we have $M$ unitary simulation channels $\{\mathcal{U}_1,\ldots,\mathcal{U}_M\}$ with the corresponding unitaries $\{U_1,\dots, U_M\}$. 
We consider averaging them according to a vector of $M$ non-negative weights $\{p_1,...,p_M\}$ such that $\sum_{m=1}^M p_m = 1$ (as shown in Fig.~\ref{fig:Illu}). Our aim is to compare how well the NUSC
\begin{align}\label{eq:Mixture}
\mathcal{U}(\rho) = \sum_{m = 1}^M p_m\mathcal {U}_m(\rho) = \sum_{m = 1}^M p_mU_m \rho U_m^\dag
\end{align}
approximates $\mathcal V(\rho)$ compared to each contributing simulation $\mathcal{U}_m(\rho)$. In the following, we will use the Frobenius norm $\norm{A}_F=\Tr(A^\dagger A)$ for convenience in the analytical calculations. It is trivial to show that
\begin{align}\label{eq:MixErrBound}
\norm{\mathcal U(\rho) - \mathcal V(\rho)}_F 
\leq \sum_{m} p_m \norm{\mathcal U_m(\rho) - \mathcal V(\rho)}_F.
\end{align}
Therefore, the performance of $\mathcal U(\rho)$ cannot be worse than the (weighted) average performance of the contributing simulations.
In particular, if $\norm{\mathcal U_m(\rho) - \mathcal V(\rho)}_F = \epsilon$ for all $m$, the weighted average $\mathcal U(\rho)$ approximates the target state $\mathcal V(\rho)$ at least as well as each $\mathcal U_m$ does. 

In practice, the performance of $\mathcal{U}$ can be much better than guaranteed by \cref{eq:MixErrBound}.
To better reflect the typical performance of $\mathcal U(\rho)$, we consider the error of $\mathcal U$ averaged over Haar-random pure initial states~\cite{Ratcliffe1994Foundations} as the figure of merit (which we call the loss function):
\begin{equation}\label{eq:GenLoss}
L_F\equiv \int d\psi \norm{\mathcal U(\ket{\psi}\bra{\psi})-\mathcal V(\ket{\psi}\bra{\psi})}_F^2.
\end{equation}
We note that the choice of the Frobenius norm (as opposed to other metrics, e.g.~the spectral norm) does not qualitatively change our conclusions, but the Frobenius norm enables exact analytical calculations of the integral above. 

We consider a short evolution time $t$ such that $\norm{H}_Ft\ll1$ and assume that $U_m$ and $V$ are functions of $t$ and are both infinitely differentiable. Under this assumption, we consider the Taylor expansion of the error of each $U_m$  in powers of $t$. 
Suppose the leading order error in each contributing simulation $U_1,...,U_M$ is $O(t^q)$ for a positive integer $q$. We can then decompose each $U_m$ up to order $t^{2q}$ as
\begin{align}\label{eq:Umseries}
U_m=V-i\sum_{s=q}^{2q}E_{m}^{(s)}t^s+O(t^{2q+1}),
\end{align}
where $E_{m}^{(s)}=\frac{i}{s!}\frac{d^s}{dt^s}(U_m-V)$ is the coefficient matrix of the $s$th-order error term in the Taylor expansion of $U_m$.  The following result relates the error of the weighted average $\mathcal U$ to the error terms $E_{m}^{(s)}$ of the contributing simulations $\mathcal U_m$. We consider $\sqrt{L_F}$, rather than $L_F$ itself, in the following theorem so that the error appears at the same order $q$ as in contributing simulations $\mathcal{U}_m$. 

\begin{theorem}\label{thm:average-error}
Given the expansions of $U_m$ in \cref{eq:Umseries} and a set of weights $\{p_1,...,p_M\}$ normalized to unity, the error of the weighted average $\mathcal U$ can be written as
\begin{equation}\label{eq:FrobLossSameOrd}
\sqrt{L_F}=\sqrt{\frac{2}{d(d+1)}}\left(L_0t^{q}+L_{1}t^{q+1}\right)+O(t^{q+2}),
\end{equation}
where $d=2^n$ is the dimension of the Hilbert space and
\begin{align}\label{eq:FrobLossSameOrd2q}
L_0^2&=d\norm{E_q}_F^2-\abs{\Tr(E_q^\dagger V)}^2,\\\label{eq:FrobLossSameOrd2q1}
L_{1}^2&=\frac{\abs{d\Tr(E_{q+1}^\dagger E_q)-\Tr(E_{q+1}^\dagger V)\Tr(V^\dagger E_q)}}{L_0},
\end{align}
with the weighted $q$th-order coefficient matrix defined as $E_{q}\equiv\sum_{m=1}^M p_m E_{m}^{(q)}$.
\end{theorem}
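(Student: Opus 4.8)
The plan is to linearize each contributing circuit about the exact evolution. Writing $U_m=V+\Delta_m$ with $\Delta_m=-i\sum_{s=q}^{2q}E_m^{(s)}t^s+O(t^{2q+1})$ and setting $\bar\Delta:=\sum_m p_m\Delta_m=-i\left(E_q t^q+E_{q+1}t^{q+1}\right)+O(t^{q+2})$, where $E_{q+1}:=\sum_m p_m E_m^{(q+1)}$, substituting into \cref{eq:Mixture} gives
\begin{equation*}
\mathcal U(\rho)-\mathcal V(\rho)=\left(\bar\Delta\rho V^\dagger+V\rho\bar\Delta^\dagger\right)+\sum_m p_m\,\Delta_m\rho\Delta_m^\dagger .
\end{equation*}
The first term, which I will call $D_1(\rho)$, is Hermitian, is $O(t^q)$, and depends on the individual circuits only through the weighted average $\bar\Delta$; the second term is $O(t^{2q})$, so in $\norm{\mathcal U(\rho)-\mathcal V(\rho)}_F^2$ it enters only through a cross term with $D_1$ at order $t^{3q}$ and through its own square at order $t^{4q}$. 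Since every Suzuki--Trotter formula has $q\ge 2$, we have $3q\ge 2q+2$, so the second term cannot influence the $t^q$ or $t^{q+1}$ coefficients of $\sqrt{L_F}$ and may be dropped — which is already the conceptual message of the theorem, that only the averaged error matters at this order.

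Next I would evaluate $L_F=\int d\psi\,\norm{D_1(\ket{\psi}\bra{\psi})}_F^2+O(t^{3q})$. Using that $\ket{\psi}\bra{\psi}$ is idempotent, that $V$ is unitary, and that $D_1=D_1^\dagger$, the four terms of $\Tr(D_1^2)$ collapse to $2\langle\psi|\bar\Delta^\dagger\bar\Delta|\psi\rangle+2\Re\bigl([\langle\psi|V^\dagger\bar\Delta|\psi\rangle]^2\bigr)$. Integrating term by term with the Haar first- and second-moment identities $\int d\psi\,\langle\psi|A|\psi\rangle=\Tr(A)/d$ and $\int d\psi\,\langle\psi|A|\psi\rangle\langle\psi|B|\psi\rangle=\bigl[\Tr(A)\Tr(B)+\Tr(AB)\bigr]/\bigl[d(d+1)\bigr]$ then gives
\begin{equation*}
L_F=\frac{2}{d(d+1)}\left(\,(d+1)\norm{\bar\Delta}_F^2+\Re\bigl([\Tr(V^\dagger\bar\Delta)]^2\bigr)+\Re\bigl(\Tr[(V^\dagger\bar\Delta)^2]\bigr)\right)+O(t^{3q}).
\end{equation*}

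The heart of the argument is simplifying this bracket using unitarity of the contributing circuits. Expanding $U_m^\dagger U_m=I$ gives $V^\dagger\Delta_m+\Delta_m^\dagger V+\Delta_m^\dagger\Delta_m=0$; because $\Delta_m^\dagger\Delta_m=O(t^{2q})$, matching orders $t^q$ and $t^{q+1}$ (again using $q\ge 2$) forces $V^\dagger E_m^{(q)}$ and $V^\dagger E_m^{(q+1)}$, hence also $V^\dagger E_q$ and $V^\dagger E_{q+1}$, to be Hermitian. Consequently $\Tr(V^\dagger E_q)$ and $\Tr(V^\dagger E_{q+1})$ are real, $\Tr[(V^\dagger E_q)^2]=\norm{V^\dagger E_q}_F^2=\norm{E_q}_F^2$, and $\Tr(V^\dagger E_q V^\dagger E_{q+1})=\Tr(E_q^\dagger E_{q+1})$. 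Substituting the expansion of $\bar\Delta$ into the bracket and using these identities, the $t^{2q}$ coefficient collapses to $d\norm{E_q}_F^2-\abs{\Tr(E_q^\dagger V)}^2=L_0^2$, and the $t^{2q+1}$ coefficient to $2\left(d\Tr(E_{q+1}^\dagger E_q)-\Tr(E_{q+1}^\dagger V)\Tr(V^\dagger E_q)\right)$; taking the square root (and assuming $L_0\ne 0$, i.e.\ the $q$th-order error does not already cancel in the average) then yields \cref{eq:FrobLossSameOrd,eq:FrobLossSameOrd2q,eq:FrobLossSameOrd2q1}.

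I expect this last simplification to be the main obstacle: without invoking $U_m^\dagger U_m=I$, the scalars $\norm{\bar\Delta}_F^2$, $[\Tr(V^\dagger\bar\Delta)]^2$, and $\Tr[(V^\dagger\bar\Delta)^2]$ do not recombine into the stated $L_0$ and $L_1$, so identifying and correctly exploiting the Hermiticity of $V^\dagger E_q$ and $V^\dagger E_{q+1}$ is exactly what makes the theorem go through. A secondary point requiring care is checking that the quadratic term $\sum_m p_m\Delta_m\rho\Delta_m^\dagger$ and the $O(t^{2q})$ remainder in the unitarity relation are genuinely beyond the order tracked — precisely where the assumption $q\ge 2$ is used.
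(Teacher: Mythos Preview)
Your argument is correct and complete for $q\ge 2$, but it takes a genuinely different route from the paper's. The paper first evaluates the Haar integral \emph{exactly} for arbitrary unitaries, obtaining a closed formula
\[
L_F=\text{(constant)}-\frac{2}{d(d+1)}\Bigl[\sum_m p_m\lvert\Tr(U_m^\dagger V)\rvert^2-\sum_{m>n}p_mp_n\lvert\Tr(U_m^\dagger U_n)\rvert^2\Bigr],
\]
and only then expands each trace in powers of $t$, using the unitarity relation order by order to simplify. You instead expand the channel difference $\mathcal U(\rho)-\mathcal V(\rho)$ first, discard the incoherent piece $\sum_m p_m\Delta_m\rho\Delta_m^\dagger$, and Haar-integrate the remaining linear term directly. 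Your route makes it immediately transparent that only the weighted averages $E_q,E_{q+1}$ can appear (the discarded quadratic term is the only place the individual $\Delta_m$'s enter), whereas the paper must track the pairwise traces $\lvert\Tr(U_m^\dagger U_n)\rvert^2$ and see cancellations among them. Conversely, the paper's route never needs to drop any term and therefore does not rely on $q\ge 2$: in their organization the quadratic contributions are automatically absorbed into the $\lvert\Tr(\cdot)\rvert^2$ expansions, so the result holds for any positive integer $q$.

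This is the one genuine limitation of your argument: the theorem is stated for arbitrary $q\ge 1$, and both places where you invoke $q\ge 2$ (to drop the $O(t^{3q})$ cross term and to conclude that $V^\dagger E_m^{(q+1)}$ is Hermitian) fail at $q=1$, since then $3q=2q+1$ and $q+1=2q$. Your appeal to ``every Suzuki--Trotter formula has $q\ge 2$'' is correct for that family but is extraneous to the hypotheses of the theorem. If you want to cover $q=1$ as well, you would need to retain the cross term $2\Re\Tr\bigl[D_1(\rho)\sum_m p_m\Delta_m\rho\Delta_m^\dagger\bigr]$ at order $t^{2q+1}$ and use the full unitarity relation $V^\dagger\Delta_m+\Delta_m^\dagger V=-\Delta_m^\dagger\Delta_m$ at that order rather than its truncation; this is precisely what the paper's closed-form approach does implicitly.
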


We provide detailed proof for the theorem in \Cref{section:ProofThm1}. Although it is hard to obtain an analytical optimal probability that minimizes Eq.~\eqref{eq:FrobLossSameOrd}, the expressions for the leading ($\propto t^q$) and the second leading ($\propto t^{q+1}$) order error terms in \Cref{thm:average-error} relate the error of the NUSC $\mathcal{U}(\cdot)$ to the errors of the contributing unitary simulations $\mathcal{U}_m$. 
In particular, \cref{eq:FrobLossSameOrd2q} shows how a reduction in $E_q$, the weighted $q$th-order error, reduces the error of the NUSC. 
Comparing the loss functions for each contributing simulation and for the NUSC given by \cref{thm:average-error}, one can see that the simulation error of $\mathcal U$ can be suppressed by choosing a set of weights that result in smaller $E_q$ and $E_{q+1}$. For technical simplicity, we use the error reduction from $E_m^{(q)}$'s to $E_q$ (for the leading order $q$) as a benchmark for the error reduction by a NUSC compared to each contributing circuit. For example, suppose we have a Hamiltonian $H=A+B$ with two simulations $U_1=e^{-iAt}e^{-iBt}$ and $U_2=e^{-iBt}e^{-iAt}$, each having a nonzero second-order error. If one sets the weights to be $(p_1,p_2)=(0.5,0.5)$, the second-order error term $E_2$ will vanish. Therefore, the average error of the NUSC is one order in $t$ better than the error of each contributing simulation. In general, however, for a small number $M$ of contributing simulations, matrices $E_q$ and $E_{q+1}$ cannot be reduced to zero by tuning the weight coefficients $p_m$ because the error matrices contain exponentially many (say, in the number of qubits being simulated) entries, while we only have $M-1$ tuning parameters. Nevertheless, as in the simple $H = A+B$ example above, we can expect a significant reduction of one or both these terms for carefully crafted weights and contributing simulations.

For long-time simulations with $\norm{H}_Ft\gg 1$, a standard approach in digital quantum simulation is to divide the evolution time into $N$ steps of duration $\Delta t$, such that  $\norm{H}_F\Delta t=\norm{H}_Ft/N\ll 1$. One then applies the same product formula in each time step, yielding a unitary simulation circuit. We first consider the error structure of such an $N$-step simulation, which is denoted as $U(N,\Delta t)$. We also denote the exact evolution and product formula in each time step as $V(\Delta t)=e^{-iH\Delta t}$ and $U(\Delta t)$, respectively. The simulation error in each step is $E(\Delta t)=V(\Delta t)-U(\Delta t)$. We decompose the error $E(\Delta t)$ into two parts as
\begin{align}\label{eq:ErrDecomp}
E(\Delta t)=[H,\eta(\Delta t)]+\xi(\Delta t).
\end{align}
The first part $[H,\eta(\Delta t)]$ represents the error that does not commute with the Hamiltonian, while the second part $\xi(\Delta t)$ commutes with $H$, i.e., $[H,\xi(\Delta t)] = 0$. The error for the $N$-step simulation $E(N,\Delta t)$ can be approximated to leading order as
\begin{align}\label{eq:LongTimeErrorRep}
E(N,\Delta t)\approx N\xi(\Delta t)+\frac{1}{\Delta t}[e^{-iHt}\eta(\Delta t)e^{iHt}-\eta(\Delta t)].
\end{align}
In \Cref{section:ErrDecomp}, we provide the proof of \cref{eq:LongTimeErrorRep} and more details on how to derive $\xi(\Delta t)$ and $\eta(\Delta t)$ from $H$ and $E(\Delta t)$. When the step size $\Delta t$ is fixed, the norm of the second term in \cref{eq:LongTimeErrorRep} is bounded by a constant $O(\norm{\eta(\Delta t)}_F)$ while the first term increases linearly with $N$. Therefore, the first term becomes the dominant error in the long-time (large-$N$) limit.

In order to study the general structure of the NUSC error in long-time simulation, we proceed by analogy with the setting for a short evolution time. We suppose that the leading-order term in $E_m(\Delta t)$ is $O(\Delta t^q)$ for an integer $q$. We can thus write the expansion of $E_m(\Delta t)$ similarly to \cref{eq:Umseries} as $E_m(\Delta t)=V(\Delta t)-U_m(\Delta t)=iE_m^{(q)}\Delta t^q+O(\Delta t^{q+1})$. According to \cref{eq:ErrDecomp}, we split the coefficient matrix $E_m^{(q)}$ into two terms as
\begin{align}\label{eq:UmErrDecomp}
E_m^{(q)}=[H,\eta_m^{(q)}]+\xi_m^{(q)},
\end{align}
where $[H,\xi_m^{(q)}]=0$. We can decompose the $N$-step simulation $U_m(N,\Delta t)$ up to order $\Delta t^q$ and ignore the part that does not increase with $N$:
\begin{align}\label{eq:LongUmExpansion}
U_m(N,\Delta t)=V-iN\xi_m^{(q)}\Delta t^q +O(N\Delta t^{q}),
\end{align}
where $V=e^{-iHt}=e^{-iHN\Delta t}$. We then combine \cref{eq:LongUmExpansion} with \cref{thm:average-error} and obtain the following error estimate:
\begin{align}\label{eq:CoroLongTine}
\sqrt{L_F}=\sqrt{\frac{2}{d(d+1)}}N[L_0^{\text{long}}\Delta t^q+O(\Delta t^{q+1})],
\end{align}
where $(L_0^{\text{long}})^2=d\norm{\xi_q}_F^2-\abs{\Tr(\xi_q^\dagger V)}^2$ and $\xi_q=\sum_{m=1}^Mp_m\xi_m^{(q)}$. \cref{eq:CoroLongTine} shows how a reduction in $\xi_q$, 
the weighted $q$th-order commuting error,
reduces the simulation error of the NUSC. By carefully choosing weights $\{p_1,\ldots,p_M\}$, we can expect a significant reduction of $\xi_q$ compared to the contributing terms $\xi_m^{(q)}$, which results in a smaller simulation error for the NUSC.

\section{Constructions of NUSCs}
\label{sec:MixChanCon}
 In this section, we consider two constructions for each step in the contributing unitaries: by permuting the terms of the Hamiltonian~\cite{Childs2019Faster} (\Cref{sec:TermOrdering}) and by symmetry transformations~\cite{Tran2021Faster} (\Cref{sec:Sym}).
\subsection{Permutation of Hamiltonian term ordering}
\label{sec:TermOrdering}
We construct contributing simulations by using product formulas with different term orderings.  We start with short-time simulations when $\norm{H}_F t\ll 1$. To construct the contributing simulation unitaries for the weighted combination, we can exploit a single-step product formula. Suppose the Hamiltonian can be decomposed into $\Gamma$ experimentally realizable non-commuting terms as $H=\sum_{j=1}^\Gamma H_j$. 
Additionally, we also assume the Hamiltonian can be parametrized as a sum of Pauli string terms
\begin{align}\label{eq:klocalPauliH}
    H = \sum_{\bm \sigma} J_{\bm \sigma} \bm \sigma,
\end{align}
where $\bm \sigma = \sigma_1\otimes\dots\otimes\sigma_n$ denotes Pauli strings on $n$ qubits and $\sigma_i\in \{I,X,Y,Z\}$ for all $1\leq i\leq n$ denote Pauli matrices on site $i$. In the following, we also use $X_i$ ,$Y_i$, and $Z_i$ to denote Pauli $X$, $Y$, and $Z$ operators on site $i$. Multiple Pauli terms $J_{\bm \sigma} \bm \sigma$ are allowed to be included in a single $H_j$ Hamiltonian term.

\begin{figure*}
    \centering
    \includegraphics[width=0.98\textwidth]{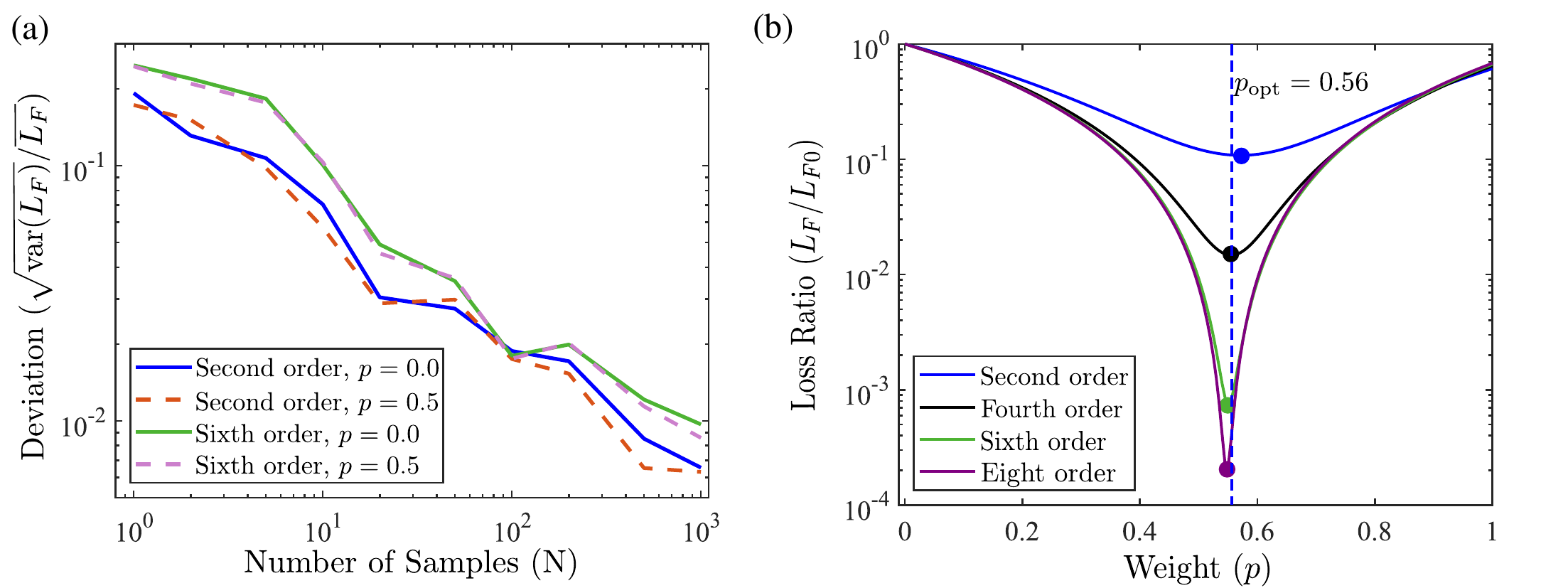}
    \caption{(a) Deviation of the empirical average from the actual loss function. We consider the short-time simulation of a nearest-neighbor XY spin chain in Eq.~\eqref{eq:1DHeisenbergLoc} at $t=0.3$.  For the second-order ($k=2$) Suzuki-Trotter formula, we use $U_1$ and $U_2$ given in Eqs.\  (\ref{eq:suzukiini1},\ref{eq:suzukiini2}). For $k= 4$, $6$, and $8$, we construct $U_1$ and $U_2$ using Eq.\ (\ref{eq:suzuki}) with $(A,B)$ and $(B,A)$ orderings. We plot the deviation of $L_F$, quantified using the ratio between the root-mean-square deviation $\sqrt{\text{var}(L_F)}$ and the averaged loss function $\overline{L_F}$ over $20$ batches of samples, as a function of sample number $N$ in each batch at different $k$'s and $p$'s. (b) Error reduction by averaging two Suzuki-Trotter formulas with $(A, B)$ and $(B, A)$ orderings of Hamiltonian terms. We plot the error metric $L_F/L_{F0}$ as a function of the weight coefficient $p$, where $L_{F0}$ is the simulation error for Suzuki-Trotter formulas $U_1$ of $k$th-order. The dashed vertical line shows the optimal weight $p$ obtained in \cref{thm:opt-weight} for second-order Suzuki-Trotter product formulas. The circles mark the minima of the curves and correspond to the optimal $p$ for each $k$. We see that these optimal $p$'s depend very weakly on $k$.
    }
    \label{fig:STcomb}
\end{figure*}

We consider a digital quantum simulation that approximates the target evolution through a $k$th-order Suzuki-Trotter product formula:
\begin{align}\label{eq:suzuki2}
&S_1(t)=\prod_{j=1}^\Gamma\text{exp}(-iH_jt),\\
&S_2(t) = \prod_{j=1}^\Gamma\text{exp}\left(-iH_j\frac{t}{2}\right)\prod_{j=\Gamma}^1\text{exp}\left(-iH_j\frac{t}{2}\right),\\\label{eq:suzuki}
&S_{k}(t) =  S^2_{k-2}\left(u_{k}t\right)S_{k-2}\left((1-4 u_{k})t\right)S_{k-2}^2\left(u_{k}t\right),
\end{align}
where $u_{k}=(4-4^{1/(k-1)})^{-1}$. The $k$th-order Suzuki-Trotter formula requires $O(5^{k/2})$ simulation blocks and provides a simulation approximation error $\epsilon_k=O(t^{k+1})$. 
Obviously, one way to reduce the approximation error is to increase the order $k$ of the Suzuki-Trotter product formula. However, this comes with the cost of an exponential increase in the number of gates in the simulation circuit. In contrast, in our approach based on NUSCs, we keep the order $k$ of the product formula (and hence the gate count) fixed.   
By building upon Suzuki-Trotter product formulas, various effective quantum simulation algorithms have been proposed \cite{Tran2021Faster,Campbell2019Random,Childs2019Faster} which improve the simulation error. 

We focus on near-term applications of digital quantum simulation with Hamiltonians consisting of a few terms.
In the following analysis, we consider simulating a Hamiltonian $H$ of the form in \cref{eq:klocalPauliH} that can be divided into two parts, i.e.~$\Gamma=2$. We simulate the evolution under $H$ by averaging the second-order product formulas with different term orderings.
Specifically, let $S$ be a subset of Pauli strings and let us decompose the Hamiltonian as a sum of two terms $H=A+B$, such that $A = \sum_{\bm \sigma \in S} J_{\bm \sigma}\bm \sigma$ and $B = \sum_{\bm \sigma \in S^c} J_{\bm \sigma}\bm \sigma$,
where $S^{c}$ is the complement of $S$.  

We consider the simulation of $\exp(-iHt)$ using second-order Suzuki-Trotter formulas \eqref{eq:suzuki2} with two possible term orderings:
\begin{align}
&U_1(t) =\text{exp}\left(-iA\frac{t}{2}\right)\text{exp}\left(-iBt\right)\text{exp}\left(-iA\frac t2\right),\label{eq:suzukiini1}\\
&U_2(t) =\text{exp}\left(-iB\frac{t}{2}\right)\text{exp}\left(-iAt\right)\text{exp}\left(-iB\frac t2\right).\label{eq:suzukiini2}
\end{align}
In the limit $\norm{H}_F t\ll 1$, \cref{eq:suzukiini1,eq:suzukiini2} both approximate $\exp(-iHt)$ up to the second order in $t$.
When averaging these two formulas with weights $\{p,1-p\}$, \cref{thm:average-error} relates the error of the resulting NUSC to the weighted third-order error $E_3$ defined after \cref{eq:FrobLossSameOrd2q1}.
The following lemma gives an expression for the norm of $E_3$.

\begin{lemma}\label{thm:opt-weight}
The norm of the weighted third-order error $E_3$ in averaging \cref{eq:suzukiini1,eq:suzukiini2} with weights $p$ and $1-p$, respectively, is
\begin{align}\label{eq:2ordErrRep}
\norm{E_{3}}_F^2 = C 2^n \left[  (2-3p)^2 J_A^2 + (1-3p)^2 J_B^2\right],
\end{align}
where $C$ is a constant that may depend on $H$ but is independent of $p$, $J_A^2=\sum_{\bm \sigma \in S}J_{\bm \sigma}^2$, and $J_B^2=\sum_{\bm\sigma\in S^c}J_{\bm \sigma}^2$. 
In particular, the optimal $p$ that minimizes $\norm{E_3}_F$ is
\begin{equation}\label{eq:2ordBestP}
p_{\text{opt}}=\frac{2J_A^2+J_B^2}{3\left(J_A^2+J_B^2\right)}.
\end{equation}
The simulation error term $\norm{E_3}_F^2$ of the NUSC is $C2^n J_A^2J_B^2/(J_A^2+J_B^2)$ at $p_{\text{opt}}$.
\end{lemma}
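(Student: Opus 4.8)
The plan is to compute $E_3$, the coefficient of the $t^3$ term in the Taylor expansion of $V - \sum_m p_m U_m$, by expanding $U_1$, $U_2$, and $V = e^{-iHt}$ with $H = A+B$ up to third order, then evaluate the Frobenius norm using the Pauli-string structure of $A$ and $B$. First I would write out the BCH-type expansions: each $U_j$ is a symmetric second-order product formula, so $U_j = e^{-iHt}\bigl(I + i t^3 F_j + O(t^4)\bigr)$ where $F_j$ is a known nested-commutator expression (for the $(A,B)$-ordering, $F_1$ involves combinations like $[A,[A,B]]$ and $[B,[B,A]]$ with the standard Trotter coefficients $\tfrac{1}{12}$, $\tfrac{1}{24}$; for $F_2$ the roles of $A$ and $B$ are swapped). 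Averaging with weights $p, 1-p$ gives $E_3 \propto p F_1 + (1-p) F_2$ up to the overall factor pulled out of $V$ (which is unitary and drops out of the Frobenius norm). Collecting terms, $E_3$ is a linear combination of $[A,[A,B]]$ and $[B,[B,A]]$ with $p$-dependent scalar coefficients that are affine in $p$; matching to the claimed form $(2-3p)^2 J_A^2 + (1-3p)^2 J_B^2$ fixes which linear combinations appear.

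The key computational step is then $\norm{E_3}_F^2$. Here I would use that for Pauli strings the Frobenius inner product is diagonal: $\Tr(\bm\sigma\, \bm\tau^\dagger) = 2^n \delta_{\bm\sigma,\bm\tau}$. Writing $[A,[A,B]]$ and $[B,[B,A]]$ as sums of Pauli strings with coefficients built from the $J_{\bm\sigma}$'s, the cross term $\Tr\bigl([A,[A,B]]^\dagger [B,[B,A]]\bigr)$ and the diagonal terms $\norm{[A,[A,B]]}_F^2$, $\norm{[B,[B,A]]}_F^2$ each factor as $2^n$ times a quantity that, after the dust settles, is proportional to $J_A^2 := \sum_{\bm\sigma\in S} J_{\bm\sigma}^2$ (respectively $J_B^2$) times a structural constant $C$ depending only on the commutator combinatorics of $H$ — crucially independent of $p$. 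The claim to verify is that all these pieces reorganize into $C\,2^n\bigl[(2-3p)^2 J_A^2 + (1-3p)^2 J_B^2\bigr]$ with no surviving cross term between the $J_A^2$ and $J_B^2$ sectors; this is what makes the optimization separable.

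Granting that form, the optimization is elementary: minimize $f(p) = (2-3p)^2 J_A^2 + (1-3p)^2 J_B^2$ over $p$. Setting $f'(p) = 0$ gives $-6(2-3p)J_A^2 - 6(1-3p)J_B^2 = 0$, i.e. $3p(J_A^2+J_B^2) = 2J_A^2 + J_B^2$, yielding $p_{\text{opt}} = \tfrac{2J_A^2 + J_B^2}{3(J_A^2+J_B^2)}$ as stated. Plugging back, $2 - 3p_{\text{opt}} = \tfrac{J_B^2}{J_A^2+J_B^2}$ and $1 - 3p_{\text{opt}} = -\tfrac{J_A^2}{J_A^2+J_B^2}$, so $f(p_{\text{opt}}) = \tfrac{J_B^4 J_A^2 + J_A^4 J_B^2}{(J_A^2+J_B^2)^2} = \tfrac{J_A^2 J_B^2}{J_A^2+J_B^2}$, giving $\norm{E_3}_F^2 = C\,2^n\, J_A^2 J_B^2/(J_A^2+J_B^2)$ at the optimum, as claimed.

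I expect the main obstacle to be the second step: showing that the Frobenius-norm bookkeeping really collapses to the advertised $(2-3p)^2 J_A^2 + (1-3p)^2 J_B^2$ structure with a clean $p$-independent constant $C$ and no leftover $J_A$–$J_B$ cross term. This requires carefully tracking the Pauli-string content of the double commutators $[A,[A,B]]$ and $[B,[B,A]]$ and arguing either that their Frobenius cross term vanishes or that it too is absorbable; the mixed double commutators like $[A,[B,B]]=0$ help, but terms such as $[A,[A,B]]$ versus $[B,[A,B]]$ could in principle overlap, and ruling that out (or showing the overlap is proportional to $J_A J_B$ times something that still packages into the stated form) is the delicate part. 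The reliance on the specific symmetric second-order structure — which kills all even-order commutators and makes only these two particular triple commutators appear at order $t^3$ — is what ultimately keeps it tractable, and I would lean on the known explicit form of the order-$t^3$ error of $S_2$ rather than rederiving BCH from scratch.
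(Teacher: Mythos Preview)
Your outline is right in spirit, and the optimization step at the end is fine, but you are missing the one algebraic move that makes the Frobenius-norm computation tractable. You write $E_3$ as a $p$-dependent linear combination of the two double commutators $[A,[A,B]]$ and $[B,[B,A]]$ and then plan to expand $\|E_3\|_F^2$ directly; you correctly worry that the cross term $\Tr\bigl([A,[A,B]]^\dagger[B,[B,A]]\bigr)$ has no reason to vanish. It does not, in general, and your direct expansion would not produce the clean separable form $(2-3p)^2 J_A^2+(1-3p)^2 J_B^2$ without further model-specific input.

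The paper avoids this entirely by first \emph{repackaging} $E_3$ before taking norms. Using only bilinearity of the commutator (no Jacobi needed), one has
\[
E_3=\frac{2-3p}{24}\,[A,[A,B]]+\frac{3p-1}{24}\,[B,[B,A]]
=\frac{1}{24}\,\bigl[(2-3p)A+(1-3p)B,\,[A,B]\bigr],
\]
so that all $p$-dependence is pushed into the single operator $(2-3p)A+(1-3p)B$ sitting in the outer slot. Then sub-multiplicativity gives
\[
\|E_3\|_F\le \tfrac{1}{12}\,\bigl\|(2-3p)A+(1-3p)B\bigr\|_F\,\bigl\|[A,B]\bigr\|,
\]
and since $A$ and $B$ are supported on disjoint Pauli strings they are Frobenius-orthogonal, whence $\|(2-3p)A+(1-3p)B\|_F^2=2^n\bigl[(2-3p)^2J_A^2+(1-3p)^2J_B^2\bigr]$. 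The factor $\|[A,B]\|^2$ (times numerical constants) supplies the $p$-independent $C$. This is exactly the step that dissolves the cross-term issue you anticipated. Note, incidentally, that the paper's argument actually establishes the stated expression as an \emph{upper bound} (triangle inequality and sub-multiplicativity are used), not a literal equality; the Lemma's $p_{\text{opt}}$ is the minimizer of this bound.
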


In \Cref{section:ProofThm2}, we provide a proof of \Cref{thm:opt-weight}. The proof follows from the calculation of the third-order error terms in $U_1$ and $U_2$. We note that we can obtain analytic optimal weights using orthogonal bases other than the basis of Pauli strings. In the case where the Hamiltonian terms $A$ and $B$ are not orthogonal in the Pauli basis, the expression for the optimal weight (\ref{eq:2ordBestP}) will be modified, although the analytical calculation is straightforward. 

To benchmark the effectiveness of averaging product formulas, we apply this technique to a nearest-neighbor XY spin chain:
\begin{equation}\label{eq:1DHeisenbergLoc}
H=\underbrace{\sum_{i=1}^{n-1}X_iX_{i+1}+h\sum_{i=1}^nX_i}_{=A}+\underbrace{\sum_{i=1}^{n-1}Y_iY_{i+1}}_{=B},
\end{equation}
where $h$ is the strength of the magnetic field along the $\hat{x}$ direction.
In all numerical simulations that follow, we estimate $L_F$ by calculating the averaged $L_F$ on $N = 10^3$ Haar-randomly chosen quantum states. 
Here, the deviation of the empirical average from the actual Haar average is bounded by $\sim\mathcal{O}(1/\sqrt{N})$ according to the Chebyshev inequality, where the coefficient in front of $1/\sqrt{N}$ depends on the error (maximized over all possible initial states) between the target evolution and the simulation. Although this coefficient depends on the system size, the system size is small in the following, so we can ignore this dependence. 
We simulate $e^{-iHt}$ for $n = 6$, $h=1$, and $t = 0.3$ using an average of the two second-order product formulas given in \cref{eq:suzukiini1,eq:suzukiini2} with the corresponding probabilities $p$ and $1-p$. In \cref{fig:STcomb}(a), we plot the deviation of $L_F$ as a function of sample number $N$ in each batch at $k=2$, $k=6$, $p=0$, and $p=0.5$ to support the claim that the deviation is negligible compared to the actual loss for $N=10^3$ samples.
In \cref{fig:STcomb}(b), we plot the reduction ratio of the error, quantified using the loss function in \cref{eq:GenLoss},  as a function of $p$.
The optimal average of $U_1$ and $U_2$ corresponds to about an order of magnitude reduction in the error.
The vertical dashed line in \cref{fig:STcomb}(b) corresponds to the value of $p_\text{opt}=0.5625$ from \cref{thm:opt-weight}.
We note that $\norm{E_3}_F$ given in \cref{thm:opt-weight} is only indirectly related to the error of the NUSC through \cref{thm:average-error}.
Additionally, higher-order terms also contribute to the error of the NUSC.
Nevertheless, \cref{fig:STcomb}(b) shows an excellent agreement between the estimate from \cref{thm:opt-weight} and the true value of $p$ that minimizes the error of the NUSC.
\begin{figure*}
    \centering
    \includegraphics[width=0.98\textwidth]{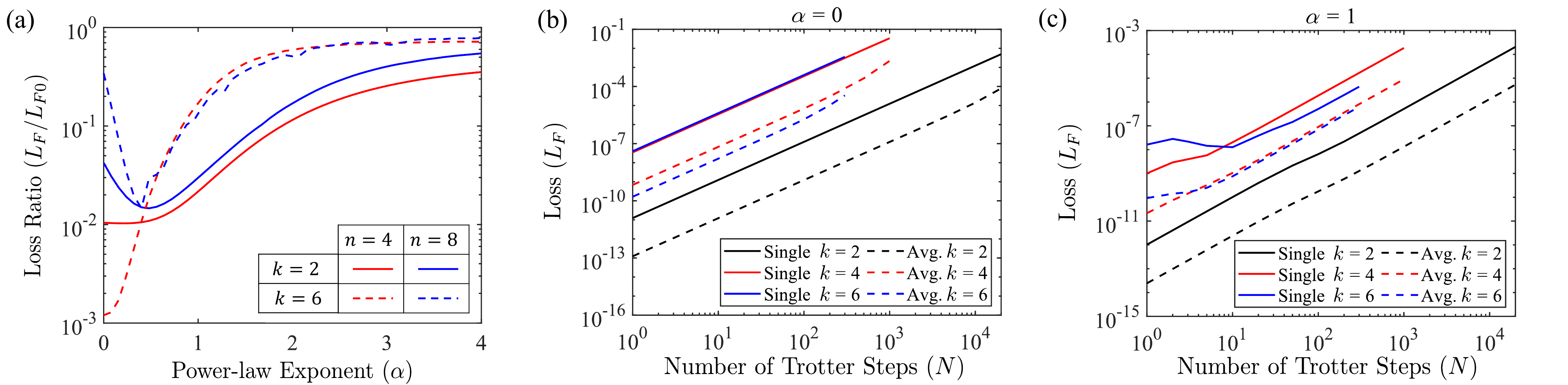}
    \caption{Long-time simulations of the power-law Heisenberg model in Eq.~\eqref{eq:PowerHeisenberg} using a combination of six Suzuki-Trotter formulas corresponding to all possible permutations of the Hamiltonian terms $H_X$, $H_Y$, $H_Z$. (a)  We plot the ratio of the figure of merit $L_F$ and the error $L_{F0}$ of  the contributing product formulas as a function of the power-law exponent $\alpha$ for the number of Trotter steps $N=2\times 10^4$ and the evolution time $t=100$ for different system sizes $n=4$, $8$. (b) In the $6$-qubit Heisenberg model (i.e.~$n=6$) with uniform all-to-all interactions ($\alpha=0$), we employ $2$nd-, $4$th-,  and $6$th-order Suzuki-Trotter decompositions for each step with fixed step sizes $0.005$, $0.1$, and $0.33$, respectively. We plot the loss function $L_F$ for the averaged channel  (the dotted lines) and the loss function for a single contributing formula (the solid lines) as a function of step number $N$. Since we fix the simulation time to be $t = 100$ but choose different step sizes for different $q$,  we get different maximal values of $N$ for different $q$. (c) The same as (b), but for  $\alpha=1$.}
    \label{fig:STcomblong}
\end{figure*}

For higher-order Suzuki-Trotter formulas, the leading error term becomes more complicated.
Therefore, it is more difficult to obtain a version of \cref{thm:opt-weight} for a general $k$th-order product formula.
Nevertheless, we will now argue that, for a non-degenerate (i.e.\ having all eigenenergies distinct) Hamiltonian with $\Gamma=2$, the 
reduction of the error term in the NUSC averaging $k\geq 2$th-order Suzuki-Trotter formulas become more prominent as $k$ grows.

To prove the above claim, we derive a recursive formula that approximately relates the error of the $k$th-order Suzuki-Trotter formula to the $(k-2)$th-order one. This recursive formula will allow us to estimate the error term $E_{k+1} $ for an arbitrary order $k$ of the Suzuki-Trotter formula.
For a $k$th-order Suzuki-Trotter formula $S_{k}$ (for either ordering of $A$ and $B$), we denote $E_{S_k}(t)=V(t)-S_k(t)$ as the simulation error. In the limit of short evolution time and large $k$, we have the following relationship between $E_{S_k}(t)$ and $E_{S_{k-2}}(t)$ (see \Cref{section:ProofInduct} for the proof):
\begin{align}\label{eq:STHighAppr}
E_{S_k}(t) =&\left[e^{iHu_kt},\left[e^{iH(1-2u_k)t},E_{S_{k-2}}(u_kt)\right]\right]  \nonumber\\ &+O((u_k-1/3)t^{k+1} ),
\end{align}
where $u_k=(4-4^{1/(k-1)})^{-1}$. 
Keeping only the linear part in the Taylor-series expansion of the exponentials, we get 
\begin{align}\label{eq:STHighApprSimp}
    E_{S_k}(t)&\approx (1-2u_k)u_kt^2\left[H,\left[H,E_{S_{k-2}}(u_kt)\right]\right]\nonumber\\
    &\approx(1-2u_k)u_k^kt^2\left[H,\left[H,E_{S_{q-2}}(t)\right]\right].
\end{align}

In the following, we consider combining $M$ $k$th-order Suzuki-Trotter formulas. To distinguish between different formulas, we denote by $E_{m}^{(k+1)}$ the $(k+1)$th-order error of the $m$th $k$th-order Suzuki-Trotter formula, i.e.~the $(k+1)$th-order term of $E_{S_k}$ for the $m$th $S_k$. Since $E_{k+1}=\sum_{m=1}^Mp_mE_{m}^{(k+1)}$, which is defined in Sec.\ \ref{sec:GenFrame} as the coefficient matrix for the averaged $(k+1)$th order error, is a linear combination of $E_{m}^{(k+1)}$, we can use Eq.~\eqref{eq:STHighApprSimp} to approximate $E_{k+1}$ using the Hamiltonian and $E_{k-2}$ or lower-order $E_{k'}$ ($k'<k-2$) for the same set of weights $\{p_1,...,p_M\}$. An important application is that we can use \cref{eq:STHighApprSimp} repeatedly to approximate $E_{k+1}(t)$ with $E_{3}(t)$ or $E_{5}(t)$ of the second-order or the $4$th-order Trotterization, which have analytical representations \cite{Childs2021Theory}. Specifically, we can combine \cref{eq:STHighApprSimp} and the error decomposition in \cref{eq:ErrDecomp}. Only the first term in \cref{eq:ErrDecomp} for $E_{k-2}$ will remain in $E_k$. Specifically, for a non-degenerate Hamiltonian (i.e.~Hamiltonian whose eigenvalues are all distinct) with $\Gamma = 2$, we have the following lemma regarding the error reduction when we average two higher-order Suzuki-Trotter formulas. 

\begin{lemma}\label{clm:ShortABErrRed}
For a non-degenerate Hamiltonian $H=A+B$, consider averaging two $k$th-order Suzuki-Trotter formulas  [defined in \cref{eq:suzuki}] in the small-$t$ limit [$\norm{H}_Ft\ll 1$] corresponding to the two orderings of $A$ and $B$, with weights $p$ and $1-p$, respectively. Then 

(i) The error reduction ratio $R_k=\norm{E_{k+1}}/\norm{E_{S_k}}$ for $k=4$ is larger than for $k=2$. Here $\norm{E_{S_k}}$ refers to an error for a fixed choice of the ordering of Hamiltonian terms in the Suzuki-Trotter formula. 

(ii) The optimal weight for averaging two $k$th-order formulas in the limit $k\to \infty$ approaches a fixed value $p_{\text{opt}}\to p_{\text{opt}}^*$.

\end{lemma}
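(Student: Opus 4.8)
The plan is to strip off the high-order structure with the recursion \cref{eq:STHighApprSimp} and then reduce both $R_4$ and $R_2$ to a spectral comparison for $\mathrm{ad}_H(X)\equiv[H,X]$. Iterating \cref{eq:STHighApprSimp} from order $k$ down to order $2$ and reading off leading Taylor coefficients gives $E^{(k+1)}_{S_k}\approx c_k\,\mathrm{ad}_H^{k-2}E^{(3)}_{S_2}$ for either ordering of $A,B$, with $c_k=\prod_{j=4,6,\dots,k}(1-2u_j)u_j^{j}$ a nonzero constant depending only on $k$. Averaging the two orderings with weights $p,1-p$ and using linearity,
\begin{equation}
E_{k+1}\approx c_k\,\mathrm{ad}_H^{k-2}E_3(p),\qquad E_3(p)=p\,E^{(3)}_{U_1}+(1-p)E^{(3)}_{U_2},
\end{equation}
so $c_k$ cancels in the ratio and $R_k^2=\min_p\norm{\mathrm{ad}_H^{k-2}E_3(p)}_F^2/\norm{\mathrm{ad}_H^{k-2}E^{(3)}_{U_1}}_F^2$, the fixed single-ordering error appearing in the denominator. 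Diagonalizing the non-degenerate $H=\sum_a\lambda_a\ket{a}\bra{a}$ (distinct $\lambda_a$), $(\mathrm{ad}_H^{r}X)_{ab}=(\lambda_a-\lambda_b)^rX_{ab}$, so $\norm{\mathrm{ad}_H^{r}X}_F^2=\sum_{a,b}(\lambda_a-\lambda_b)^{2r}\abs{X_{ab}}^2$ and likewise for inner products; thus $R_k$ is a weighted least-squares quantity whose weight $w^{(k)}_{ab}=(\lambda_a-\lambda_b)^{2(k-2)}$ enters only as powers of eigenvalue gaps. Finally, the standard second-order Trotter expansion gives $E^{(3)}_{U_1}=\tfrac{1}{24}([A,[A,B]]-2[B,[B,A]])$ and $E^{(3)}_{U_2}$ with $A\leftrightarrow B$, hence $E^{(3)}_{U_1}-E^{(3)}_{U_2}=\tfrac{1}{8}[H,[A,B]]$ and $E_3(p)=\tfrac{1}{24}\big((3p-2)[A,[A,B]]+(1-3p)[B,[B,A]]\big)$, consistently with \cref{thm:opt-weight}.

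Part (ii) then follows quickly. Since $(E_3(p))_{ab}$ is affine in $p$, the objective $\norm{\mathrm{ad}_H^{k-2}E_3(p)}_F^2=\sum_{a,b}w^{(k)}_{ab}\abs{p(E^{(3)}_{U_1})_{ab}+(1-p)(E^{(3)}_{U_2})_{ab}}^2$ is a strictly convex quadratic in $p$ — strict because $E^{(3)}_{U_1}-E^{(3)}_{U_2}=\tfrac{1}{8}[H,[A,B]]\neq0$ for non-commuting $A,B$ — whose minimizer $p^{(k)}_{\mathrm{opt}}$ is a ratio of two $w^{(k)}$-weighted sums over $(a,b)$. Put $\omega_\star=\lambda_{\max}(H)-\lambda_{\min}(H)$. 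Since all $\lambda_a$ are distinct, $\abs{\lambda_a-\lambda_b}=\omega_\star$ holds only for the unique pair realizing the extreme eigenvalues and its transpose, and $\abs{\lambda_a-\lambda_b}<\omega_\star$ strictly otherwise; dividing numerator and denominator by $\omega_\star^{2(k-2)}$, each weight ratio $(\abs{\lambda_a-\lambda_b}/\omega_\star)^{2(k-2)}$ tends to $1$ on that pair and to $0$ on all others as $k\to\infty$. By Hermiticity the transpose pair contributes identically, so the surviving factors of $2$ cancel and $p^{(k)}_{\mathrm{opt}}\to p^{\star}_{\mathrm{opt}}$, a fixed number built only from the extremal matrix elements $(E^{(3)}_{U_1})_\star,(E^{(3)}_{U_2})_\star$. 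The convergence is geometric at rate $((\text{second-largest gap})/\omega_\star)^{2}$, which accounts for the essentially $k$-independent optimal weight in \cref{fig:STcomb}(b). The only genericity input beyond non-degeneracy is $(E^{(3)}_{U_1}-E^{(3)}_{U_2})_\star=\tfrac{1}{8}\omega_\star([A,B])_\star\neq0$; if this fails, the leading pair drops out and the argument runs with the next-largest gap.

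Part (i) is where the real work lies. Writing the minimum of the convex quadratic as the squared distance from the origin to the affine line through $\mathrm{ad}_H^{k-2}E^{(3)}_{U_1}$ and $\mathrm{ad}_H^{k-2}E^{(3)}_{U_2}$ and normalizing gives the clean identity
\begin{equation}
R_k^2=\sin^2\theta_k,\qquad \cos\theta_k=\frac{\Tr(X_k^\dagger Y_k)}{\norm{X_k}_F\,\norm{Y_k}_F},
\end{equation}
with $X_k=\mathrm{ad}_H^{k-2}(E^{(3)}_{U_1}-E^{(3)}_{U_2})$ and $Y_k=\mathrm{ad}_H^{k-2}E^{(3)}_{U_1}$: $R_k$ is the sine of the gap-weighted angle between $E^{(3)}_{U_1}-E^{(3)}_{U_2}=\tfrac{1}{8}[H,[A,B]]$ and the reference $E^{(3)}_{U_1}$. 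For $k=2$ the weight is uniform, $\theta_2$ is exactly the angle behind \cref{thm:opt-weight}, and $R_2^2=J_A^2J_B^2/[(J_A^2+J_B^2)(J_A^2+4J_B^2)]\leq\tfrac{1}{9}$; for $k=4$ the same identity holds with the weight $(\lambda_a-\lambda_b)^4$ inserted, so the stated inequality is equivalent to: this reweighting enlarges the angle between two fixed Hermitian operators. Being a genuinely finite-$k$ statement, it admits no concentration argument, and I expect this comparison to be the main obstacle. The route I would take is to expand $\sin^2\theta_4$ and $\sin^2\theta_2$ in the eigenbasis and compare term by term, exploiting two competing features of the reweighting — that $E^{(3)}_{U_1}-E^{(3)}_{U_2}=\mathrm{ad}_H(\tfrac{1}{8}[A,B])$ already carries one extra gap factor, whereas the reference $E^{(3)}_{U_1}$ has an $H$-commuting (zero-gap) component that the $(\lambda_a-\lambda_b)^4$ weight annihilates — and then bounding the surviving gap-weighted sums with a Chebyshev/moment-type inequality; a more computational alternative is to use the known closed form of the fourth-order error $E^{(5)}$ \cite{Childs2021Theory}. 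As a consistency check on the direction, part (ii) gives $R_k^2\to\sin^2\theta_\star$ as $k\to\infty$, the squared sine of the ordinary angle between the complex numbers $(E^{(3)}_{U_1}-E^{(3)}_{U_2})_\star$ and $(E^{(3)}_{U_1})_\star$, which generically exceeds the small value $R_2^2\leq\tfrac{1}{9}$; so $R_k$ does move upward away from its $k=2$ value, and part (i) is the $k=4$ instance. Carrying this through for every non-degenerate $H$ with $\Gamma=2$ — equivalently, monotonicity of $R_k$ in $k$ — is the crux of the lemma.
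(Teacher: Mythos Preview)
Your treatment of part~(ii) is essentially the paper's: iterate \cref{eq:STHighApprSimp} down to $\mathrm{ad}_H^{k-2}E_3$, pass to the $H$ eigenbasis so that $\mathrm{ad}_H$ acts entrywise as multiplication by $\lambda_a-\lambda_b$, and let the weight $(\lambda_a-\lambda_b)^{2(k-2)}$ concentrate on the unique extremal gap as $k\to\infty$. That is correct and matches the paper's argument.

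Part~(i) is where there is a genuine gap. You explicitly flag the $R_2$ vs.\ $R_4$ comparison as ``the main obstacle'' and leave it open, proposing a Chebyshev/moment-type bound or a direct computation via the closed form of $E^{(5)}$. The paper's argument is much shorter, and the key ingredient is already sitting in your write-up unrecognized. You noted that $E^{(3)}_{U_1}-E^{(3)}_{U_2}=\tfrac18[H,[A,B]]$ is of the form $[H,\,\cdot\,]$; in the $H$ eigenbasis any such matrix is purely off-diagonal, which is exactly the statement that the diagonal ($H$-commuting) parts of $E^{(3)}_{U_1}$ and $E^{(3)}_{U_2}$ coincide: $\xi_1=\xi_2\equiv\xi$. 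The paper verifies this by a direct computation in the eigenbasis. It then follows that $p\,\xi_1+(1-p)\,\xi_2=\xi$ for every $p$, so the commuting component is an irreducible floor in the averaged $k=2$ error. Writing $E^{(3)}_{U_m}=[H,\eta_m]+\xi$ with $\xi\perp[H,\eta_m]$ in the Frobenius inner product (diagonal vs.\ off-diagonal), one has
\[
R_2^2=\frac{\norm{\xi}_F^2+\min_p\norm{p[H,\eta_1]+(1-p)[H,\eta_2]}_F^2}{\norm{\xi}_F^2+\norm{[H,\eta_1]}_F^2}.
\]
At $k=4$, \cref{eq:STHighApprSimp} gives the leading error as $\mathrm{ad}_H^2$ applied to the $k=2$ error, and $\mathrm{ad}_H^2\xi=0$ removes the floor entirely, so $R_4$ is governed solely by the reducible off-diagonal piece $[H,\eta]$. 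That is the whole mechanism the paper invokes; no moment inequality is needed. Your $\sin^2\theta_k$ identity is fine (for $p$ ranging over $\mathbb R$), but framing the question as ``does the $(\lambda_a-\lambda_b)^4$ reweighting enlarge an angle'' obscures the simple structural reason the two orders differ: it is not a reweighting effect at all, but the fact that the $p$-independent component $\xi$ present at $k=2$ is literally annihilated at $k=4$.
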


We leave the detailed proof to \Cref{section:ErrDecomp}, where we use \cref{eq:STHighApprSimp} and the error decomposition in \cref{eq:ErrDecomp} to derive the coefficient matrix of the leading error term for combining two $k$th-order Suzuki-Trotter formulas with different orderings of $A$ and $B$. In the limit of large $k$, we further provide the conditions for completely eliminating the leading-order error. As a numerical illustration, we further compare the performance improvement of the combination of two $k$th-order Suzuki-Trotter formulas $S_k(t)$ for $k=2$, $4$, $6$, $8$ in Fig.~\ref{fig:STcomb}. We first observe that the optimal weights $p_{\text{opt}}$ remain nearly the same for $k=2$ and $k=4$. (They are not exactly the same because Lemma \ref{clm:ShortABErrRed} guarantees equality of optimal weights only at infinite $k$ and because we empirically optimize for  the loss function $L$ instead of the error term in the Lemma.) As the order increases, the improvement of the weighted average also increases. For two $8$-th order Suzuki-Trotter formulas, we find that NUSCs reduce the error metric $L_F$ by four orders of magnitude. These observations are consistent with the analytical results given by \Cref{clm:ShortABErrRed}.

For a general Hamiltonian $H=\sum_{\gamma=1}^\Gamma H_\gamma$ with $\Gamma\geq 3$, the error term $E_{m}^{(3)}$ depends heavily on the structure of $H$ and of each term $H_\gamma$. However, we can still expect an error reduction by averaging over different product formulas constructed by permuting the order of the terms $H_\gamma$. For example, consider the simplest case of the second-order product formula, $k=2$. The lowest-order simulation error for a second-order Suzuki-Trotter formula can be represented as \cite{Childs2021Theory}
\begin{align}\label{eq:2orderErrText}
\frac{t^3}{24}\sum_{\substack{\gamma_1=1\\ \gamma_2,\gamma_3=\gamma_1+1}}^\Gamma\left[H_{\gamma_1}+2 H_{\gamma_2},\left[H_{\gamma_3},H_{\gamma_1}\right]\right] + \O{t^4}.
\end{align}
Notice that there are at most $O(\Gamma^3)$ terms in \cref{eq:2orderErrText}, and the Frobenius norm for each commutator is bounded by $\norm{\left[H_{\gamma_1},\left[H_{\gamma_2},H_{\gamma_3}\right]\right]}_F\leq 4\norm{H_{\gamma_1}}_F\norm{H_{\gamma_2}}_F\norm{H_{\gamma_3}}_F$. Therefore, the simulation error for a single second-order Suzuki-Trotter formula is bounded above by $\norm{E_{S_2}}_F\leq O((t\Gamma\max_{\gamma=1,...,\Gamma}\norm{H_\gamma}_F)^3)$. Consider now averaging over all $\Gamma!$ second-order Suzuki-Trotter formulas corresponding to all possible orderings of $H_1,...,H_\Gamma$ with equal weights. The simulation error is now a linear combination of terms as in \cref{eq:2orderErrText} corresponding to all  $\Gamma!$ possible permutations of $H_\gamma$. We can observe that all commutators of the form $\left[H_{\gamma_1},\left[H_{\gamma_2},H_{\gamma_3}\right]\right]$, with $\gamma_1\neq\gamma_2\neq\gamma_3$,  will cancel each other. As a result, only at most $O(\Gamma^2)$ terms will remain, and the asymptotic simulation error will be bounded by $O((t\max_{\gamma=1,...,\Gamma}\norm{H_\gamma}_F)^3\Gamma^2)$, which is smaller compared to the contributing Trotterizations by a factor of $\Gamma$. The weighted average may also bring a further considerable reduction in the simulation error if we further optimize the weights or increase the order of the formulas. Note that, for a large number of terms $\Gamma$, there are exponentially many possible orderings of $H_\gamma$. We can therefore exploit random sampling techniques. In \Cref{section:ConvRandSam}, using matrix concentration theorems \cite{Tropp2015Introduction,Tropp2011Freedman,Gross2011Recovering,Chen2021Concentration,Chen2021ConcentrationPRX} and the spectral norm for technical simplicity, we prove that, with high probability,  the statistical error from sampling only $T$ orderings of Hamiltonian terms is below $\Theta(\Delta t^{q}N^{1/2}/T)$, where $q$ is the leading-order term for the simulation error of each product formula. This result shows that the weighted average implemented by random sampling can efficiently converge to the expectation regardless of $\Gamma$.

We now consider long-time simulations and numerically illustrate the effectiveness of averaging unitary circuits on the example of the one-dimensional Heisenberg model with power-law interactions:
\begin{eqnarray} 
H&=&H_X+H_Y+H_Z \nonumber\\
&=& \sum_{j=1}^{n-1}\sum_{i=j+1}^n\frac{1}{\abs{i-j}^\alpha}\bigg(X_i X_j+ Y_i Y_j+Z_i Z_j\bigg). \label{eq:PowerHeisenberg}
\end{eqnarray}
Here $H_{X, Y, Z}$ are the Hamiltonian terms containing the products of Pauli matrices $X_i X_j$, $Y_i Y_j$, and $Z_i Z_j$, respectively. The power-law exponent $\alpha$ describes how fast the interaction decays with the distance between the spins. 
We consider the simulation of $\exp(-iHt)$ for $0\leq t \leq 100$.
As mentioned in Sec.~\ref{sec:GenFrame}, we choose one ordering and repeatedly use the Suzuki-Trotter formula of this ordering in each time step. We then average over different orderings with the corresponding weights. We derived the analytical value for the optimal mixing probability $p_{opt}$ in NUSCs only in the case when Hamiltonians consist of two non-commuting terms $H=A+B$ [Eq.~(\ref{eq:2ordBestP})]. In a more general case, when the number of non-commuting terms is $\geq 3$, the analytical calculation becomes tedious, since it relies on the general expression for the third-order Trotter error.  
In order to compute the optimal vector of weight coefficients numerically, we discretize the interval of values of the weight coefficients $p_m \in [0, 1]$ into $10$ bins, so that there are $11$ possible values $p_m\in\{0,0.1, 0.2,...,1\}$.
Next, we perform a grid search on a discrete lattice of $11^M$ points, where $M$ is the number of contributing circuits for averaging. In the Heisenberg model example we are considering, we have $M=6$ corresponding to the number of permutations of three Hamiltonian terms $H_X$, $H_Y$, and $H_Z$.
In Fig.~\ref{fig:STcomblong}(a), we observe that, for strongly long-range models with $\alpha\leq 1$, the weighted average method reduces the error by an order of magnitude. Even for $\alpha \gtrsim 2$, averaging unitary circuits can still result in a significant $40\%$ reduction in the loss. 
In Fig.~\ref{fig:STcomblong}(b,c), we focus on the $6$-qubit power-law Hamiltonian with $\alpha=0$ and $\alpha=1$. We increase the number of steps and observe that NUSCs reduce $L_F$ by an order of magnitude in both short-time and long-time simulations. In Fig.~\ref{fig:STcomblong}(c), we observe that the ratio of loss functions (i.e.\ error reduction) is different in short-time simulation and long-time simulation. This is because the reduction ratio is different for $E_q$ and $\xi_q$ in \cref{eq:ErrDecomp}. It is also interesting that, in the case of all-to-all interactions ($\alpha=0$), the loss function $L_F$ grows strictly linearly with the number of Trotter steps $N$, see Fig.~\ref{fig:STcomblong}(b). We provide an analytical explanation of this behavior in \Cref{section:ErrDecomp}.

The approach for averaging over product formulas with different Hamiltonian term ordering to reduce the simulation error can be used not only in the context of simulation on quantum hardware but also in classical simulation methods such as the time-evolving block-decimation (TEBD) algorithm. In particular,  in \Cref{section:iTEBD}, we apply the averaging technique to the infinite time-evolving block-decimation algorithm~\cite{Vidal2007Classical} to compute the ground state and the corresponding energy of a spin chain. We provide numerical evidence that averaging unitary simulations can improve the convergence rate of the classical algorithm. We believe this method could also be used to improve a wide range of other classical computational methods for real-time and imaginary-time simulation of quantum many-body systems. 

\subsection{Symmetry transformations}\label{sec:Sym}
\begin{figure*}
    \centering
    \includegraphics[width=0.98\textwidth]{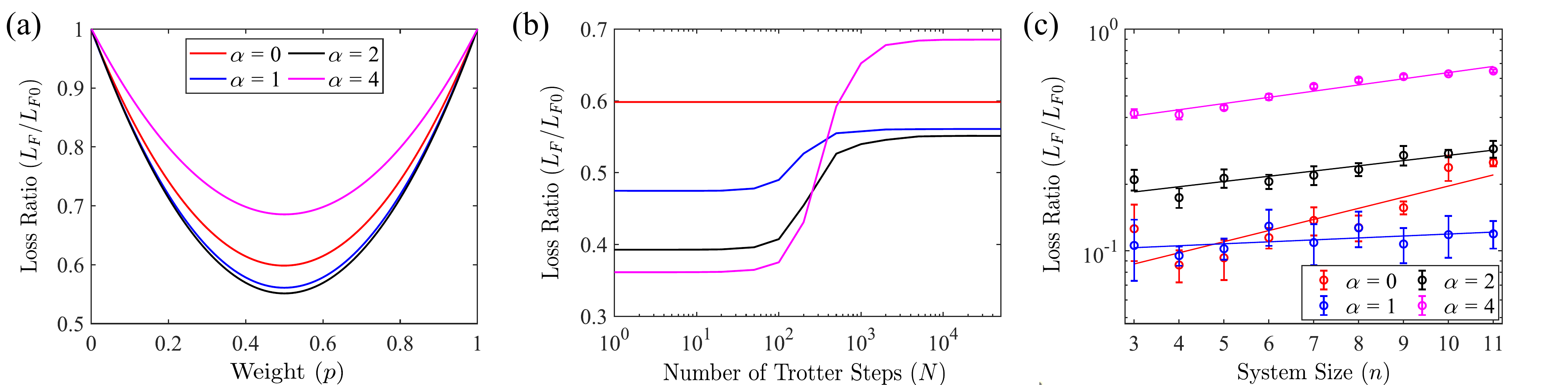}
    \caption{Simulating the power-law Heisenberg model [Eq.~\eqref{eq:PowerHeisenberg}] by averaging unitary circuits using symmetry transformations. Given a formula $U_1$, we construct other formulas in the average by applying symmetry transformations $R(\theta, \phi)$: $U_1 \to R^\dag(\theta, \phi)\, U_1\, R(\theta, \phi)$. (a)~Reduction of the quantum simulation error in the second-order Suzuki-Trotter formula by averaging two unitary circuits constructed from repeatedly using product formulas $\{U_1, \tilde U_1\}$ in each step, where $\tilde U_1 = C^{H\dagger} U_1 C^H$ and $C^H$ is the Hadamard gate applied to all spins, as a function of the weighting coefficient $p$. The Trotterized evolution consists of $N=5\times10^{4}$ time steps with a fixed step size $\Delta t=2\times 10^{-3}$, corresponding to a total simulation time $t=100$.  (b) Same as in (a), but we fix $p = 0.5$ and vary the number $N$ of Trotter steps.
    (c)  We implement $M=10$ different Haar-random symmetry transformations on the second-order Suzuki-Trotter formula $U_1$ for $t=100$ and increase system sizes $3\leq n \leq 11$. We plot the simulation error reduction as a function of system size $n$ for different power-law exponents $\alpha=0,1,2,4$. The solid lines show linear fits on a log-linear plot. The error bar here indicates the standard deviation due to randomly sampled symmetry transformations.}
    \label{fig:SymProb}
\end{figure*}

In this section, we consider the construction of the contributing product formulas (in each step) using symmetry transformations. For that, we assume that the Hamiltonian is invariant under a set of unitary transformations from a group $S$, i.e.~$[C,H] = 0$, for all $C\in S$. 

For the short-time evolution $V(\Delta t) = e^{-iH\Delta t}$ such that $\norm{H}_F\Delta t\ll1$, assume that we have a simulation $U_1(\Delta t)=V(\Delta t)-i\sum_{s=q}^{2q}E_1^{(s)}\Delta t^s+O(\Delta t^{2q+1})$ with leading order error $O(t^q)$, where $E_1^{(s)}$ are the $s$th-order error operators, as defined in Sec.~\ref{sec:GenFrame}. We construct $M$ contributing product formulas by choosing a finite set of unitaries (possibly including the identity operator) $C_0,...,C_{M-1}\in S$ and applying them to the simulation $U_1$: $U_m=C_m^\dagger U_1C_m$. 
Since $[C_m,V]=0$, we can rewrite the weighted combinatorial $q$th-order error operator $E_q$ of the average channel as
\begin{equation}\label{eq:SymProbEq}
E^\text{sym}_q = \sum_{m=0}^{M-1} p_m C_m^\dagger E_1^{(q)}C_m.
\end{equation}
The representation in \cref{eq:SymProbEq} is a weighted average of the error over $M$ simulations each having error $C_m^\dagger E_1^{(q)}C_m$ and weight $p_m$.
We can explain the reduction of error in Eq.~\eqref{eq:SymProbEq} by imagining $E_{1}^{(q)}$ as a vector and $C_m^\dagger E_1^{(q)}C_m$ as a rotation of the original vector. Unlike in Sec.~\ref{sec:TermOrdering}, each contributing simulation is now guaranteed to have the errors of the same norm: $\norm{C_m^\dagger E_1^{(q)}C_m}_F = \norm{E_1^{(q)}}_F$. By combining different simulations, we get a weighted combination of vectors that have different orientations. The length for the resulting vector cannot be longer and is usually smaller compared to the original simulations due to the triangle inequality. 

If we have no information about the error structure, we can choose symmetry transformations $C_m$ randomly from the group $S$. In the original paper on symmetry protection~\cite{Tran2021Faster}, a symmetry transformation is added in each time step, and each time step has the same duration. Compared with that construction, the NUSCs considered here have an additional degree of freedom: we can adjust the weights $\{p_m\}$. Furthermore, in our averaging approach, we use the same symmetry transformation in each time step, so the symmetry transformation $C_m$ in each step will cancel with the $C_m^\dagger$ in the next step, leaving symmetry transformations only at the beginning and at the end of the circuit. Therefore, we no longer need to apply symmetry transformations in each time step, thus potentially reducing the total gate count.

To illustrate error reduction using simulations constructed with symmetry transformations, we carry out numerical simulations on the one-dimensional long-range Heisenberg model in Eq.~\eqref{eq:PowerHeisenberg}. The Hamiltonian is invariant under global rotations $S=\{\bigotimes_{i=1}^n R_i(\theta, \phi):R_i\in SU(2)\}$ with $R_i$ denoting an arbitrary single-qubit rotation $R$ on the $i$-th spin (the same rotation $R$ is applied on each spin). Specifically, we choose $R$ Haar-randomly. Similar to the symmetry protection approach~\cite{Tran2021Faster}, we expect an $O(M^{-\frac12})$ decrease in the simulation error non-commutative with any symmetry $C\in S$ except identity if we assume equal weights $p_m$ and randomly picked symmetries $C_m$. This result can be understood by viewing the rotations $R$ as a random walk in the vector space of operators~\cite{Tran2021Faster}. 

As shown above, randomly chosen symmetry transformations applied to the entire evolution operator can reduce the simulation error. 
We can obtain greater improvement if we know the structure of the simulation error. Here, we consider a special case: $C_{m}=(C_1)^m$ for a fixed unitary operator $C_1$ and $m=0,\ldots,M-1$. We further assume that $C_1=\text{exp}(-i O\Delta)$ is generated by a Hermitian operator $O$ such that $\norm{O}\Delta\ll 1$ and at least one eigenvalue of $O\Delta$ is an irrational multiple of $\pi$. In this case, we can obtain a provable $O(\frac{1}{M})$ reduction of the part of the error that does not commute with $C_1$. Given an arbitrary simulation error $E$, we employ a technique similar to \cref{eq:ErrDecomp} and decompose the error as $E=[O,\eta_C]+\xi_C$, where $[O,\eta_C]$ ($\xi_C$) is the part of the error that does not commute (commutes) with $O$. The derivation of $\xi_C$ and $\eta_C$ is the same as the derivation of $\xi$ and $\eta$, except one needs to replace $H$ with the operator $O$. We then have
\begin{align}\label{eq:SymReduce}
E_q^{\text{sym}}\approx \xi_C+\frac{1}{M\Delta t}\left(e^{i M\Delta tO}\eta_C e^{-iM\Delta t O}-\eta_C\right).
\end{align}
Here, we assume that $M$ is a large number. The derivation of \cref{eq:SymReduce} follows the same pattern as the derivation of \cref{eq:LongTimeErrorRep}. Therefore, by increasing $M$, we can suppress the part of the error that does not commute with $O$. In \Cref{section:Equiv}, we provide the proof of Eq.~\eqref{eq:SymReduce} and benchmark the error reduction achieved with this scheme.

We remark that there are advantages and disadvantages for the two methods. When the number of Hamiltonian terms $\Gamma$ is small, the averaging over term permutations involves only a relatively small number of contributing formulas $M=\Gamma!$ and provides an asymptotic error reduction, whereas averaging over symmetry group $S$ would have $1/M$ convergence rate to the same asymptotic error value as $M\to \infty$. In the opposite limit, when the number of Hamiltonian terms is large and the number of possible term permutations grows exponentially, the averaging over the full permutation group of Hamiltonian terms becomes prohibitive. In such a case, there are two options: (a) perform averaging over a finite number of symmetry transformations $S$,  or (b) use the sampling approach to generate a finite subset of random permutations of the Hamiltonian terms, as described in \cref{eq:2orderErrText}. Depending on the concrete experimental setting, it might be more convenient to use either method (a) or (b) for Hamiltonians with a large number of terms.    

In Fig.~\ref{fig:SymProb}(a,b), we demonstrate the effect of symmetry-based NUSCs for the simulation of the power-law Heisenberg spin chain in Eq.~\eqref{eq:PowerHeisenberg}. We consider $n=6$ qubits with open boundary conditions and apply a single symmetry transformation, resulting in $M=2$ contributing unitaries.  We choose $R_i$ to be $\pi/2$ rotations around the $x$-axis corresponding to the Hadamard gate, which is equivalent to the substitution $X\leftrightarrow Z$ and $Y\leftrightarrow -Y$. Therefore, applying this symmetry transformation to the evolution in this model is equivalent to implementing a product formula with different orderings of the three parts of the Hamiltonian. The symmetry operator can be written as $C^H=(R^H)^{\otimes n}$, where $R^H$ denotes a single-qubit Hadamard gate. In Fig. \ref{fig:SymProb}(a), we observe a $30\%$ to $45\%$ reduction in the average error for different power-law exponents $\alpha$ by using equal weights for the two contributing unitaries. In Fig.~\ref{fig:SymProb}(b), we fix the step size and increase the total simulation time. For different power-law exponents $\alpha$, we observe $40\%$ to $65\%$ reduction in $L_F$ in short-time simulations and $30\%$ to $50\%$ reduction in long-time simulations. Similar to the previous section, we also observe a step-like increase in the error reduction ratio as the number of steps $N$ grows from $N=100$ to $1000$, reflecting the dominance of one of the two terms in \cref{eq:ErrDecomp}. In Fig.~\ref{fig:SymProb}(c), we choose $M = 10$ symmetry transformations randomly from the $SU(2)$ symmetry group. Our method works well for the power-law exponents in the range  $0\leq\alpha\leq 1$  and results in simulation error reduction for different system sizes.

An important question concerning the practical implementation of our approach is whether the error reduction persists for larger systems without changing the choice of weights. In general, it is hard to derive an analytical bound with favorable dependence on system size $n$ for the NUSC error (the loss function) in Eq.~\eqref{eq:GenLoss}. This is because we use the Frobenius norm in order to analytically compute the Haar integral. As the Frobenius norm for spin models can increase exponentially with $n$, it might be hard to get a useful bound in the limit of large $n$. However, we remark that the scaling of error with $n$ will not increase exponentially if we consider the spectral norm in Eq.~\eqref{eq:GenLoss} instead of the Frobenius norm.
While the spectral norm is a common choice in most analytical results on digital quantum simulation, we cannot compute Eq.~\eqref{eq:GenLoss} using the spectral norm. 
We would like to note that, as $n$ increases, the slope for the system-size scaling for the power-law exponents $\alpha=1$ and $\alpha=2$ is small in Fig. \ref{fig:SymProb}. For power-law models, we can thus expect that the error reduction due to mixing product formulas persists up to system sizes $n\sim 100$, where any classical simulation becomes computationally hard.

\section{Experimental Benchmarks on the IonQ Quantum Processor}\label{sec:BraketExp}
\begin{figure}
    \centering
    \includegraphics[width=0.49\textwidth]{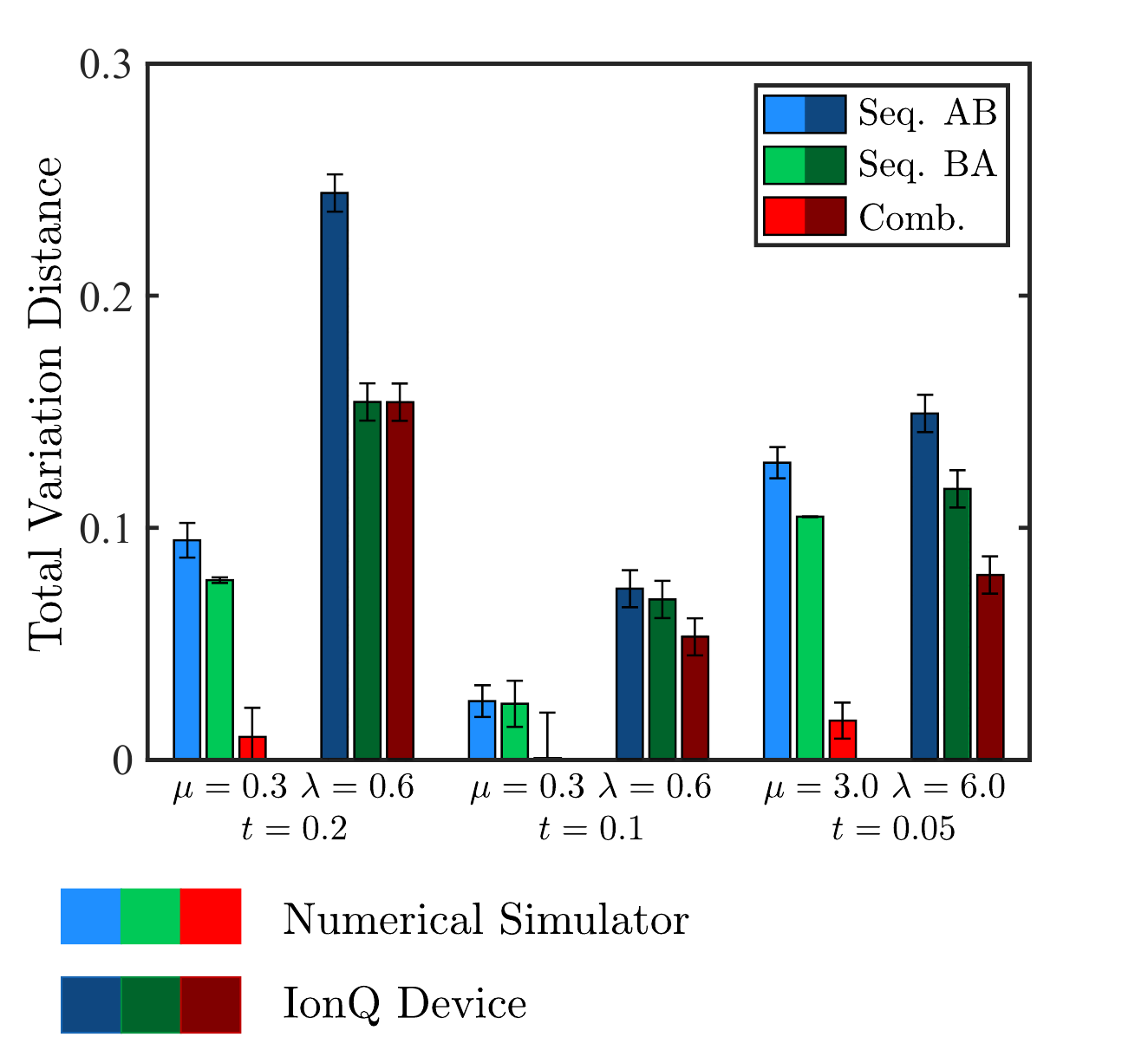}
    \caption{Experimental error suppression using a linear combination of product formulas to simulate time evolution under the Hamiltonian in Eq.~\eqref{eq:IsingTransLong}. We plot the total variation distance between the measured probability distribution of bitstrings and the ideal distribution obtained by solving the Schr\"{o}dinger equation: $\textrm{Total Variation Distance} \!=\! \frac{1}{2}\sum_{u=1}^{2^n} \left|p_u^{(\text{meas})}-p_u^{(\text{Schrod})}\right|$, where $p_u^{(\text{Schrod})}=|\psi_u|^2$ is the squared amplitude of the wavefunction for the computational basis state $u$.  
    For each set of simulation parameters $(\mu,\lambda,t)$, we show two groups of histograms: (left) the ideal numerical simulator and (right) the IonQ Harmony device. To distinguish these two groups, the simulations on the numerical simulator are depicted using lighter colors,  while the experiments on the IonQ quantum processor are depicted using darker colors. Within each histogram group, we compare the performance of individual first-order product formulas, $e^{-iAt}e^{-iBt}$ (blue) and $e^{-iBt}e^{-iAt}$ (green), and their weighted linear combination with equal weights $\vec p=\{0.5,0.5\}$ (red). All simulations on the IonQ device were performed for an $n=5$ qubit subsystem, and the number of measurement shots was set to $N_{\text{shot}}=10^4$.
    }
    \label{fig:IonQExp}
\end{figure}

In the previous sections, we presented both analytical results and numerical benchmarks for averaging product formulas. 
To provide further benchmarks of NUSCs, we carry out experiments on IonQ's 11-qubit trapped-ion quantum computer \cite{Wright2019Benchmarking} via the Amazon Braket cloud platform~\cite{AmazonBraket}.
The IonQ Harmony device features all-to-all connectivity with average single-qubit and two-qubit gate fidelity of $99.5\%$ and $97.5\%$, respectively.

We consider a Hamiltonian for the Ising model in both transverse and longitudinal magnetic fields with periodic boundary conditions~\cite{Atas2017Quantum}:
\begin{align}\label{eq:IsingTransLong}
H=\underbrace{\sum_{i=1}^nX_iX_{i+1}+\mu\sum_{i=1}^nX_i}_{A}+\underbrace{\lambda\sum_{i=1}^nZ_i}_{B}.
\end{align}
This Hamiltonian can be natively simulated using only $R_{XX}(\theta)$ gates and single-qubit rotations chosen from the IonQ device gate set. 
The model (\ref{eq:IsingTransLong}) is not exactly solvable provided  $\mu$ and $\lambda$ are nonzero and provided $\lambda\neq 1$~\cite{Atas2017Quantum}. We simulate this model for system size $n=5$ on the IonQ device. We consider combining first-order product formulas $e^{-iAt}e^{-iBt}$ and  $e^{-iBt}e^{-iAt}$ with weights $\{p,1-p\}$. We focus on short-time simulations using a single step. This allows us to maximally suppress the experimental noise brought by circuit depth and simulation time. We fix the input state to be $(R^H)^{\otimes n}\ket{0}^{\otimes n}=\frac{1}{\sqrt{2^n}}\sum_{i=0}^{2^n-1}\ket{i}$ (where $\ket{i}$ runs over all computational basis states) and estimate the total variation distance between expected and measured probability distributions in the computational basis. We set the number of measurement shots to be $N_{\text{shot}}=10^4$ for each product formula. 

The results of the experiment are shown in Fig.~\ref{fig:IonQExp}(a). We simulate the dynamics of the Hamiltonian in \cref{eq:IsingTransLong} at different values of $(\mu,\lambda,t)$ on both a numerical simulator and the real IonQ device.  The error on the IonQ device is composed of three parts: the analytical simulation error, the sampling noise as we can only approximate the distribution using limited samples, and the experimental noise. On a numerical simulator, there is no experimental noise. 
We fix $p = 0.5$ to eliminate the second-order error $\propto t^2$. When combining two product formulas, we set the number of shots per product formula to be $N'_{\text{shot}}=N_{\text{shot}}/2 = 5\times 10^3$ for a fair comparison between the contributing product formulas and the average. We found a regime of parameters where the sampling errors remain smaller compared to hardware errors and Trotterization errors. 
We set the simulation times to be small enough $t\in [0, 0.2]$, so that the accumulated errors on the quantum hardware remain comparable to the Trotter error of an ideal simulation. 
Increasing the number $N_{\text{shot}}$ of measurement shots reduces the sampling error, however, this becomes problematic when the initial and final states are close to each other so that a prohibitively large number of shots is required to achieve the necessary level of accuracy. 
As the shot number $N_{\text{shot}}$ is unchanged, the sampling error (represented by the error bar) for the sampling procedure remains the same for different $t$. 
To ensure that the sampling error does not overwhelm the Trotter error in such cases, we choose larger $\mu$ and $\lambda$ for $t=0.05$ in order to increase the Trotter error because the parameters that control the Trotter error are the dimensionless products $\mu t$ and $\lambda t$.

When simulating quantum dynamics classically, NUSCs can achieve an improvement of one order of magnitude. On the other hand, on the IonQ Harmony device, the improvement is negligible for $t=0.2$ but becomes noticeable as we decrease the simulation time from $t=0.2$ down to $t=0.05$. At $t=0.05$, we experimentally demonstrate a $30\%$ reduction in the simulation error. Our results indicate that combining product formulas results in the suppression of the effective Trotter error in experimental settings. Although the Trotter error reduction in NUSCs occurs both at short and long evolution times (as demonstrated in Fig.~\ref{fig:SymProb}), the effect becomes experimentally measurable on noisy devices only at short evolution times. 
Finally, in the most interesting regime when the total evolution time is large, digital quantum simulations split the evolution time into smaller time steps and apply a product formula in each time step. Our approach further applies a NUSC over full circuits of multiple time steps. However, hardware errors depend not only on the total simulation time $t$, but also on the number of gates applied throughout the evolution, since every gate incurs coherent control errors which will accumulate. 
Therefore, we would expect that, in a realistic experimental setting, digital quantum simulation using either NUSCs or standard Suzuki-Trotter formulas will fail only when the number of Trotter steps is below a certain critical value, whereas the hardware error will start to dominate when the number of Trotter steps is above this critical value.

\section{Summary and outlook \label{sec:outlook}}
In summary, we consider an approach to reduce the digital quantum simulation error using a weighted average over a few unitary circuits. We analyze two methods for constructing contributing unitary circuits: (i) taking different orderings of Hamiltonian terms in Suzuki-Trotter formulas and (ii) exploiting symmetry transformations. We show that one can potentially achieve the accuracy of a higher-order simulation via simply averaging the outcome of different unitary circuits regardless of the number of simulation steps (i.e.~regardless of the simulation time), which can save considerable quantum resources. We demonstrate the error reduction via NUSCs using experiments on the IonQ device via the Amazon Braket cloud platform. Finally, in \Cref{section:iTEBD},  we demonstrate numerically that our approach for averaging unitary simulations can improve the classical infinite time-evolving block decimation algorithm for simulating quantum systems \cite{Vidal2007Classical}.

While we provide the optimal weight distribution for the lowest-order error only in the case of Hamiltonians made up of two directly realizable terms ($H=A+B$), it would be important to provide an analytical bound for the error reduction for a general Hamiltonian that contains more than $2$ experimentally realizable terms. It would also be interesting to analytically prove the error reduction of NUSCs and characterize how the error reduction scales with system size in long-time simulations.

The figure of merit exploited in this work evaluates the performance of a given simulation by calculating the average simulation error on pure states. Error analysis averaged over all quantum states in the low-energy subspace of a given Hamiltonian, building upon the result in Ref.~\cite{Csahinouglu2021Hamiltonian}, would be another interesting future direction.

It is also important and interesting to generalize the analysis in this paper to more advanced quantum simulation methods such as truncated Taylor series \cite{BerryCCKS2015} or qubitization \cite{Low2019Hamiltonian}, as the error structure would be typically more complicated. 

Finally, we emphasize that the optimal weights in our averaging approach depend on the exact error structure determined by both the simulation algorithm and the properties of the Hamiltonian. It would be interesting to calculate the exact optimal weights for the symmetry transformation approach and the permutation-of-terms approach for some physically relevant models.

\section{Acknowledgements}
We thank Tongyang Li and Dong-Ling Deng for helpful discussions. Y.K., M.C.T., P.B., and A.V.G.~were supported in part by NSF QLCI (award No.~OMA-2120757), DoE ASCR Accelerated Research in Quantum Computing program (award No.~DE-SC0020312), NSF PFCQC program, the DoE ASCR Quantum Testbed Pathfinder program (award No.~DE-SC0019040), AFOSR, ARO MURI, AFOSR MURI, and DARPA SAVaNT ADVENT. Support is also acknowledged from the U.S.~Department of Energy, Office of Science, National Quantum Information Science Research Centers, and Quantum Systems Accelerator. 

\bibliographystyle{apsrev4-1-title}
\bibliography{papers,product-formula}

\onecolumngrid
\newpage
\appendix
\makeatletter

\begin{center} 
    {\large \bf Appendix for: Improved Digital Quantum Simulation by Non-unitary Channels}
\end{center} 

In this Appendix, we provide additional details on numerical calculations in the main text and expand upon the theoretical aspects of this work. 
In \Cref{section:ProofThm1}, we provide a detailed proof of \Cref{thm:average-error} in the main text. 
In \Cref{section:ProofThm2}, we prove \Cref{thm:opt-weight} in the main text and provide more numerical evidence. 
In \Cref{section:ProofInduct}, we provide the proof of \cref{eq:STHighAppr} in the main text. 
In \Cref{section:ErrDecomp}, we decompose the simulation error into two terms and study the behavior of each term in short-time simulations and long-time simulations. We also explain the linear growth of simulation error with the number of Trotter steps in \Cref{fig:STcomblong}(b) at $\alpha=0$. Finally, we give the proof of Lemma~\ref{clm:ShortABErrRed} in the main text.
In \Cref{section:iTEBD}, we use NUSCs in the classical infinite time-evolving block decimation (iTEBD) algorithm to calculate the ground state and its energy for one-dimensional quantum models. We numerically show that NUSCs could considerably accelerate the convergence rate of the algorithm. 
In \Cref{section:ConvRandSam}, we provide the step and sample complexity calculations to bound the random fluctuations when implementing random sampling for Hamiltonians with a large number of terms in \Cref{sec:TermOrdering}.
In \Cref{section:Equiv},  we provide additional numerical results for the symmetry protection approach in \Cref{sec:Sym}. We also provide analytical and numerical verification for the $O(1/M)$ reduction for combining a simulation under a particular group of symmetry protection considered in \Cref{sec:Sym}: $\{C_{m}=(C_1)^m\}_{m=0,\ldots,M-1}$. 

\section{Proof of \Cref{thm:average-error}}
\label{section:ProofThm1}
In this section, we prove \Cref{thm:average-error} in the main text.

Suppose that we have $M$ contributing simulations $U_1,...,U_M$ and the target evolution $V$. The average error $L_F$ in \cref{eq:GenLoss} for the NUSC can be written as
\begin{align}\label{eq:Ldef}
& L_F(U_1,...,U_M,p_1,...,p_M,V)=\int d|\psi\rangle
\norm{\sum_{m=1}^Mp_mU_m\ketbra{\psi}U_m^\dagger-V\ketbra{\psi}V^\dagger}_F^2,
\end{align}
where $\norm{\cdot}_F$ is the Frobenius norm, and the integral is over pure states drawn from the Haar measure. When $M = 1$, this integral can be evaluated analytically~\cite{Ratcliffe1994Foundations}:
\begin{align}\label{eq:Haar-loss-F}
&L_F(U,V)=\int d|\psi\rangle\norm{U\ketbra{\psi}U^\dagger-V\ketbra{\psi}V^\dagger}_F^2
=2\left[1-\frac{1}{d(d+1)}\left(d+\abs{\Tr\left(U^\dagger V\right)}^2\right)\right],
\end{align}
where $d$ is the dimension of the system. This result makes use of the observation that $U^\dagger\ketbra{\psi}U$ and $V^\dagger\ketbra{\psi}V$ are pure states and of the calculations of Haar integrals in Refs.~\cite{Poland2020No,Zhang2014Matrix,Nielsen2002Simple}.

The choice of using the Frobenius norm enables analytical evaluation of the average error for any $M$:
\begin{align}\label{eq:A7}
L_F(U_1,...,U_M,p_1,...,p_M,V)=&2\left(1-\frac{1}{d+1}\right)\left(\sum_{m=1}^Mp_m^2+\sum_{m>n}p_mp_n\right)\nonumber\\
&-\frac{2}{d(d+1)}\left[\sum_{m=1}^Mp_m\abs{\Tr(U_m^\dagger V)}^2-\sum_{m>n}p_mp_n\abs{\Tr(U_m^\dagger U_n)}^2\right].
\end{align}

Now we provide the proof of \cref{thm:average-error} in the main text. 
Given a product formula with leading order error $O(t^q)$, we expand the approximation error up to the $(2q+1)$-th order. Since each simulation $U_m=V-i\sum_{s=q}^{2q+1}E_{m}^{(s)}t^s+O(t^{2q+2})$ is unitary, i.e. $U_m^\dag U_m = V^\dag V = \mathbb I$, the error operators $E_{m}^{(s)}$ satisfy
\begin{align}\label{eq:A10}
i\sum_{s=q}^{2q+1}t^s\left(E_m^{(s)\dagger} V-V^\dagger E_m^{(s)}\right)+t^{2q}E_m^{(q)\dagger} E_m^{(q)}+t^{2q+1}\left(E_m^{(q+1)\dagger} E_m^{(q)}+E_m^{(q)\dagger} E_m^{(q+1)}\right)=O\left(t^{2q+2}\right).
\end{align}
We exploit this constraint to expand $\Tr(U_m^\dagger V)$ and $\Tr(U_m^\dagger U_n)$. Ignoring higher order terms $O(t^{2q+2})$, we have
\begin{align}
&\Tr\left(U_m^\dagger V\right)=d+i\sum_{s=q}^{2q+1}t^s\Tr\left(E_{m}^{(s)\dagger} V\right),\\
&\abs{\Tr(U_m^\dagger V)}^2=d^2-t^{2q}\left[d\Tr(E_{m}^{(q)\dagger} E_{m}^{(q)})-\abs{\Tr(E_{m}^{(q)\dagger} V)}^2\right]\nonumber\\
&\quad\quad\quad\quad\quad\quad-2t^{2q+1}\Re\left[d\Tr(E_m^{(q+1)\dagger} E_m^{(q)})-\Tr(E_{m}^{(q+1)\dagger} V)\Tr(V^\dagger E_{m}^{(q)})\right],\\
&\Tr(U_m^\dagger U_n)=d+i\sum_{s=q}^{2q+1}t^s\left[\Tr(E_{m}^{(s)\dagger} V)-\Tr(V^\dagger E_{n}^{(s)})\right]+t^{2q}\Tr(E_{m}^{(q)\dagger} E_{n}^{(q)})\nonumber\\
&\quad\quad\quad\quad\quad\quad-t^{2q+1}\left[\Tr(E_{m}^{(q+1)\dagger} E_{n}^{(q)})+\Tr(E_{m}^{(q)\dagger} E_{n}^{(q+1)})\right],\\
&\abs{\Tr(U_m^\dagger U_n)}^2=d^2-t^{2q}\left\{d\left[\Tr(E_{m}^{(q)\dagger} E_{m}^{(q)})+\Tr(E_{n}^{(q)\dagger} E_{n}^{(q)})-2\text{Re}(\Tr(E_{m}^{(q)\dagger} E_{n}^{(q)})\right]-\abs{\Tr(E_{m}^{(q)\dagger} V)}^2\right.\nonumber\\
&\qquad\qquad\left.-\abs{\Tr(E_{n}^{(q)\dagger} V)}^2+2\Tr(E_{m}^{(q)\dagger} V)\Tr(E_{n}^{(q)\dagger} V)\right\}-2t^{2q+1}\left\{d\Re\left[\Tr(E_m^{(q+1)\dagger} E_m^{(q)})+\Tr(E_n^{(q+1)\dagger} E_n^{(q)})\right]\nonumber\right.\\
&\qquad\qquad+d\left[\Tr\left(E_{m}^{(q+1)\dagger} E_{n}^{(q)}\right)+\Tr\left(E_{m}^{(q)\dagger} E_{n}^{(q+1)}\right)\right]-\Re\left[\Tr(E_{m}^{(q+1)\dagger} V)\Tr(V^\dagger E_{m}^{(q)})-\Tr(E_{n}^{(q+1)\dagger} V)\Tr(V^\dagger E_{n}^{(q)})\right.\nonumber\\
&\qquad\qquad\left.\left.-\Tr(E_{n}^{(q+1)\dagger} V)\Tr(V^\dagger E_{m}^{(q)})-\Tr(E_{m}^{(q+1)\dagger} V)\Tr(V^\dagger E_{n}^{(q)})\right]\right\}.
\end{align}
Hence, we obtain the lowest two orders of the average error $L_F$:
\begin{align}\label{eq:A12}
L_F=&\frac{2t^{2q}}{d(d+1)}\left[d\Tr(E_q^\dagger E_q)-\abs{\Tr(E_q^\dagger V)}^2\right]+\frac{4t^{2q+1}}{d(d+1)}\Re\left[d\Tr(E_{q+1}^\dagger E_q)-\Tr(E_{q+1}^\dagger V)\Tr(V^\dagger E_q)\right],
\end{align}
where $E_q=\sum_{m=1}^M p_mE_{m}^{(q)}$ and $ E_{q+1}=\sum_{m=1}^Mp_mE_{m}^{(q+1)}$. 
\cref{eq:A12} is exactly the conclusion of \cref{thm:average-error}. 
We note that the above equation also guarantees that the $(2q)$-th order error is always positive. 
This is because $d\Tr(A^\dagger A)\geq{\abs{\Tr(A)}^2}$, where $A=E_q^\dagger V$.

\section{Proof of \Cref{thm:opt-weight}}\label{section:ProofThm2}
\begin{figure}
    \centering
    \includegraphics[width=0.49\textwidth]{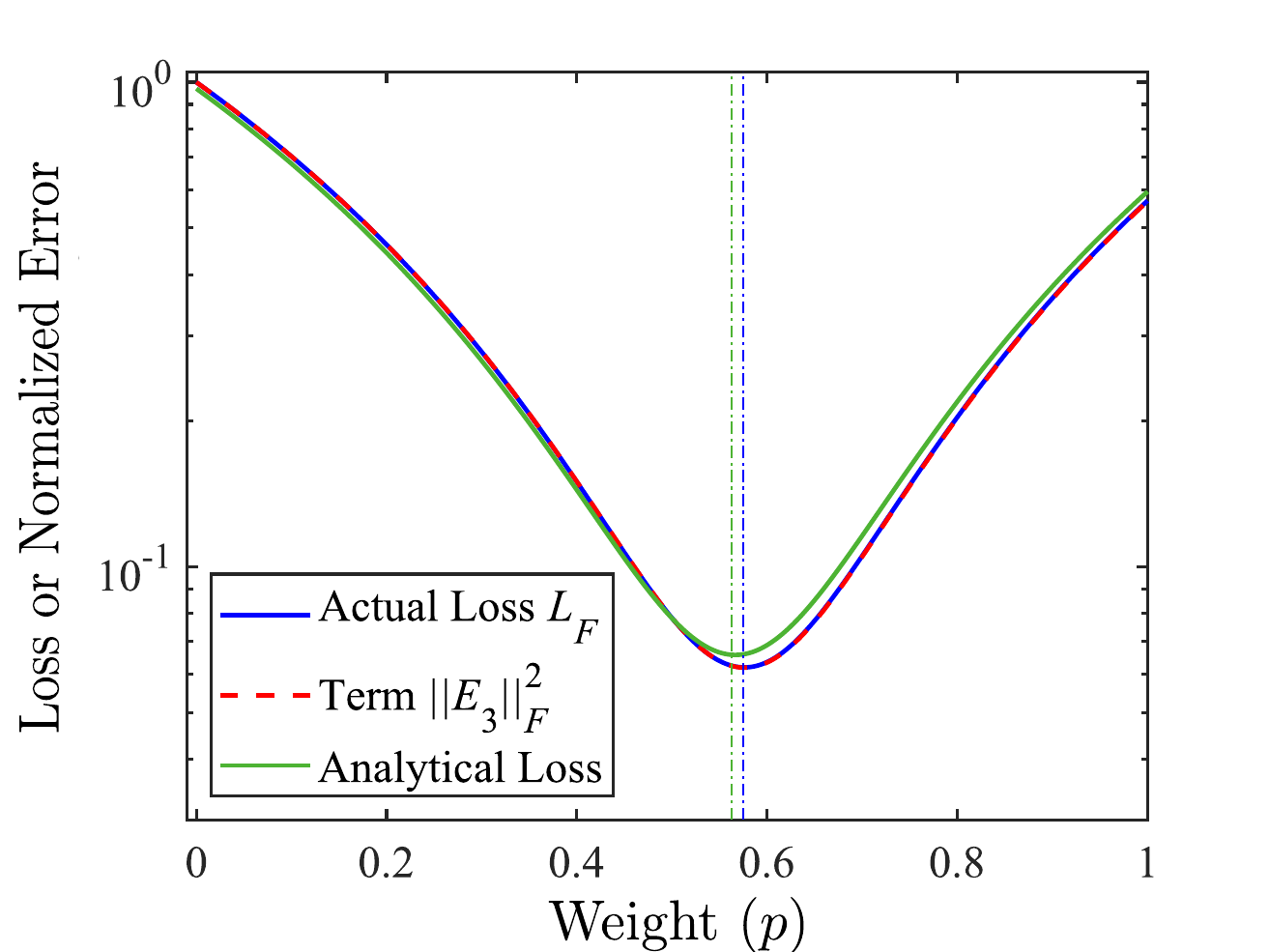}
    \caption{Illustrative numerical simulation of the one-dimensional spin model $H=\sum_{i}X_iX_{i+1}+JX_i+\sum_jY_jY_{j+1}$. We plot the actual average error (i.e.\ loss) $L_F$ (blue), the numerical result for $\norm{E_3}_F^2$ (red), and the analytical result for $\norm{E_3}_F^2$ (green). All curves are normalized to be $1.0$ at $p = 0$ (where we only apply $U_1$). 
The optimal weight for the (approximate) analytical result is $p_{\text{opt}}=0.5625$, while the actual optimal weight is very close: $p_{\text{opt}}=0.5750$.}
    \label{fig:Thm2App}
\end{figure}
In this section, we present a detailed proof of \Cref{thm:opt-weight} in the main text and provide additional numerical evidence illustrating the Lemma. 

As defined in Eqs.~(\ref{eq:suzukiini1},\ref{eq:suzukiini2}) in the main text, $U_1$ and $U_2$ are the two possible second-order Trotterizations of $A$ and $B$, which are the two experimentally realizable parts of the Hamiltonian $H = A + B$. Following the notation in \Cref{section:ProofThm1}, we derive the third-order error for the two product formulas and the NUSC according to Ref.~\cite{Childs2021Theory}: 
\begin{align}\label{eq:D1}
& E_{1}^{(3)}=\frac{1}{12}\left[B,\left[B,A\right]\right]-\frac{1}{24}\left[A,\left[A,B\right]\right],\\\label{eq:D2}
& E_{2}^{(3)}=\frac{1}{12}\left[A,\left[A,B\right]\right]-\frac{1}{24}\left[B,\left[B,A\right]\right],\\\label{eq:D3}
& E_{3}=\frac{3p-1}{24}\left[B,\left[B,A\right]\right]+\frac{2-3p}{24}\left[A,\left[A,B\right]\right]=\frac{1}{24}\left[(2-3p)A+(1-3p)B,\left[A,B\right]\right].
\end{align}

By exploiting the triangle inequality and the sub-multiplicative property of the norm, we can bound the norm of $E_3$ by
\begin{equation}\label{eq:D4}
\norm{E_{3}}_F\leq\frac{1}{12}\norm{(2-3p)A+(1-3p)B}_F\norm{\left[A,B\right]}.
\end{equation}

This bound indicates that the norm of the averaged third-order error 
depends on 
$\norm{(2-3p)A+(1-3p)B}_F$. We write the Hamiltonian as $H = \sum_{\bm \sigma} J_{\bm \sigma} \bm \sigma$, where $\sigma$ are Pauli strings, and assume that $A$ and $B$ have support on non-overlapping sets $S$ and $S^c$ of Pauli strings. In this case, we have $\Tr\left[\left(\alpha A+\beta B\right)^2\right]=\alpha^2\Tr(A^2)+\beta^2\Tr(B^2)$, so that 
\begin{equation}\label{eq:D5}
\norm{E_{3}}_F^2 \leq\frac{1}{12} 2^n \left[  (2-3p)^2 J_A^2 + (1-3p)^2 J_B^2\right]\norm{\left[A,B\right]}^2,
\end{equation}
where $J_A^2=\sum_{\bm \sigma \in S}J_{\bm \sigma}^2$, $J_B^2=\sum_{\bm\sigma\in S^c}J_{\bm \sigma}^2$, and $n$ is the number of qubits. \Cref{eq:D5} also allows us to derive the optimal weight $p_{opt}=\frac{2J_A^2+J_B^2}{3(J_A^2+J_B^2)}$ and the corresponding improvement in \Cref{thm:opt-weight} in the main text.

To further illustrate the result of \Cref{thm:opt-weight}, we provide the numerical simulation of the one-dimensional 
spin model $H=\sum_{i}X_iX_{i+1}+JX_i+\sum_jY_jY_{j+1}$. As shown in Fig.~\ref{fig:Thm2App}, we consider averaging two second-order Trotterizations given in Eqs.~(\ref{eq:suzukiini1},\ref{eq:suzukiini2}) in the main text with $A = \sum_{i}X_iX_{i+1}+JX_i$ and $B = \sum_jY_jY_{j+1}$. We set the system size to be $n=6$, the simulation time to be $t=0.1$, and the parameter $J$ to be $J=1$. In this case, we have $J_A^2= 2^n(2n-1)$ and $J_B^2=2^n(n-1)$. By \Cref{thm:opt-weight}, we can obtain the optimal weight to be $p_{\text{opt}}=0.5625$ and the error to be $0.070$ and $0.111$ of the errors of $U_1$ and $U_2$, respectively. Here, the absolute value of the errors for $U_1$, $U_2$, and the average with the optimal weights are $0.0213$, $0.0130$, and $0.0023$. We observe from the numerical results in  Fig.~\ref{fig:Thm2App} that the actual loss $L_F$ and error $\norm{E_3}_F^2$ deviate from the analytical result slightly due to the higher-order error.

\section{Proof of Equation \eqref{eq:STHighAppr}}\label{section:ProofInduct}
In this section, we provide the proof of \cref{eq:STHighAppr} in the main text. 

As mentioned in the main text, the recursive definition for higher-order Suzuki-Trotter product formulas is given by
\begin{align}\label{eq:E1}
&S_2(t) = \prod_{j=1}^\Gamma\text{exp}\left(iH_j\frac{t}{2}\right)\prod_{j=\Gamma}^1\text{exp}\left(iH_j\frac{t}{2}\right),\\\label{eq:E2}
&S_{k}(t) =  S^2_{k-2}\left(u_kt\right)S_{k-2}\left(\left(1-4 u_k\right)t\right)S_{k-2}^2\left(u_kt\right),
\end{align}
where $u_k=\left(4-4^{1/(k-1)}\right)^{-1}$. Here, $S_k(t)$ provides a digital quantum simulation with $(k+1)$th-order error in $t$. Without loss of generality, we can assume that $E_{S_k}(t)$ has the form
\begin{equation}\label{eq:E3}
E_{S_k}(t)=K^k_{k+1}t^{k+1}+K_{k+2}^kt^{k+2}+K_{k+3}^kt^{k+3}+O(t^{k+4}),    
\end{equation}
where $K^k_{i}$ denotes the error coefficient matrix of th $i$-th order error for $E_{S_k}(t)$. By the recursive relation given by Eq. \eqref{eq:E1}, we can derive the following relation between $E_{S_k}(t)$ and $E_{S_{k-2}}(t)$ under the assumption $\norm{H}_Ft\ll1$:
\begin{equation}\label{eq:E4}
\begin{aligned}
E_{S_k}(t)=&E_{S_{k-2}}(u_kt)e^{iH(1-u_k)t}+e^{iHu_kt}E_{S_{k-2}}(u_kt)e^{iH(1-2u_k)t}+e^{iH(1-u_k)t}E_{S_{k-2}}(u_kt)\\
&+e^{iH(1-2u_k)t}E_{S_{k-2}}(u_kt)e^{iHu_kt}+e^{iH2u_kt}E_{S_{k-2}}((1-4u_k)t)e^{iH2u_kt},
\end{aligned}
\end{equation}
where the higher-order terms of the form $E_{S_{k-2}}(u_kt)E_{S_{k-2}}(u_kt)$ and $E_{S_{k-2}}(u_kt)E_{S_{k-2}}((1-4u_k)t)$  are ignored. $E_{S_k}(t)$ has the lowest-order error term of $(k+1)$th-order, while $E_{S_{k-2}}(t)$ has the lowest-order error term of $(k-1)$th-order. We consider the $k$th-order error on both the left-hand and the right-hand side of Eq.~\eqref{eq:E4}. The coefficient on the left-hand side is $0$, while the coefficient on the right-hand side can be computed to yield the following equation:
\begin{equation}\label{eq:E5}
\begin{aligned}
0=2u_k^{k-1}(5u_k-1)t^k\cdot\left(2K_k^{k-2}-i\left\{K_{k-1}^{k-2},H\right\}\right),\\
\end{aligned}
\end{equation}
where $\{A,B\}=AB+BA$ is the anticommutator. We utilized the fact that $4u_k^{k-1}+(1-4u_k)^{k-1}=0$. 

Now we consider the difference between the left-hand side and the right-hand side of \cref{eq:STHighAppr} in the main text. Taking advantage of the representation in Eq.~\eqref{eq:E4}, this difference can be written as
\begin{equation}\label{eq:E6}
2e^{iHu_kt}E_{S_{k-2}}(u_kt)e^{iH(1-2u_k)t}+2e^{iH(1-2u_k)t}E_{S_{k-2}}(u_kt)e^{iHu_kt}+e^{iH2u_kt}E_{S_{k-2}}((1-4u_k)t)e^{iH2u_kt}.
\end{equation}

The $k$th-order term in \cref{eq:E6} gives exactly the right-hand side of Eq.~\eqref{eq:E5} and hence vanishes. We now consider the $(k+1)$th-order term in Eq.~\eqref{eq:E6}. It gives
\begin{equation}\label{eq:E7}
\begin{aligned}
&-4K_{k+1}^{k-2}u_k^{k-1}t^{k+1}(5u_k-1)(3u_k-1)\\
&+u_k^{k-1}t^{k+1}\left[i\left\{K_k^{k-2},H\right\}\cdot 6u_k(5u_k-1)+\left(K_k^{k-2}H^2+H^2K_k^{k-2}\right)\cdot\left(3u_k^2+4u_k-1\right)+HK_k^{k-2}H\cdot(24u_k^2-4u_k)\right].
\end{aligned}
\end{equation}
Given that, as $k\to\infty$, $u_k\to1/3$, we have the first term approaching $0$ and the rest of the terms approaching to $2/3u_k^{k-1}t^{k+1}\left\{(2K_k^{k-2}-i\left\{K_{k-1}^{k-2},H\right\}),iH\right\}$ in this limit. Then, using Eq.~\eqref{eq:E5}, we find these remaining terms also vanish. Therefore, the leading-order error of \cref{eq:E6} has a scaling of $O(t^{k+2})$ when $k\to\infty$. According to \cref{eq:E4}, the difference between the left-hand side and the right-hand side of \cref{eq:STHighAppr} is exactly \cref{eq:E6} and has a $(k+2)$th order leading-order term. This concludes the proof of \cref{eq:STHighAppr} in the main text.

In addition, in Table \ref{tab:Thm4Finite}, we provide numerical values for some of the parameters in  Eq.~\eqref{eq:E7} and their ratios for $k=4,6,8,10, \infty$.
\begin{table}[H]
    \centering
    \begin{tabular}{|r|r|r|r|}
        \hline
        $k$ & $6u_k(5u_k-1)$ & $(3u_k^2+4u_k-1)/6u_k(5u_k-1)$ & $(24u_k^2-4u_k)/6u_k(5u_k-1)$\\
        \hline
        $4$ & $2.667$ & $0.440$ & $0.924$\\
        \hline
        $6$ & $1.937$ & $0.470$ & $0.954$\\
        \hline
        $8$ & $1.722$ & $0.480$ & $0.967$\\
        \hline
        $10$ & $1.619$ & $0.485$ & $0.974$\\
        \hline
        $\infty$ & $\frac{4}{3}$ & $\frac12$ & $1$\\
        \hline 
    \end{tabular}
    \caption{The parameters in Eq.~\eqref{eq:E7} for some finite $k$ and for infinite $k$}
    \label{tab:Thm4Finite}
\end{table}

From Table \ref{tab:Thm4Finite}, we see that, when $k=8$, the ratios $(3u_k^2+4u_k-1)/6u_k(5u_k-1)$ and $(24u_k^2-4u_k)/6u_k(5u_k-1)$  deviate by only about $5\%$ from their ideal values at $k\to\infty$. Since the average error $L_F$ is defined using the square of the norm, such a deviation would only cause an $\approx 10^{-3}$ deviation.

\section{Error decomposition for long-time simulations}\label{section:ErrDecomp}
In this section, we consider the error structure of a $N$-step simulation $U(N,\Delta t)$. Suppose the exact evolution in a time step $\Delta t$ is $V(\Delta t)=e^{-iH \Delta t}$. We consider a simulation circuit $U(\Delta t)=V(\Delta t)-E(\Delta t)$, where $E(\Delta t)$ is the simulation error. We can decompose $E(\Delta t)$ into
\begin{align}\label{eq:B1}
E(\Delta t)=\left[H, \eta(\Delta t)\right]+\xi(\Delta t)  
\end{align}
for some operators $\eta$ and $\xi$ such that $\xi$ commutes with $H$.
In terms of $\eta$ and $\xi$, the error for the $N$-step simulation can be approximated to leading order as
\begin{align}
E(t)&\approx\sum_{j=1}^N e^{-iH(N-j)\Delta t}E(\Delta t)e^{iH(N-j)\Delta t}
\approx\int_{0}^te^{-iHt'}\frac{E(\Delta t)}{\Delta t}e^{iHt}dt'\\
&=\xi(\Delta t)N+\frac{1}{\Delta t}\left[e^{-iHt}\eta(\Delta t)e^{iHt}-\eta(\Delta t)\right].
\label{eq:long-time-error}
\end{align}
Notice that, at a fixed $\Delta t$, the second term in \cref{eq:long-time-error} is bounded by $\frac{2\norm{\eta}}{\Delta t}$, independent of $N$. Meanwhile, the first term increases linearly with $N$ and hence becomes the dominant error in the long-time (large-$N$) limit. 

To derive an analytical form of $\eta$ and $\xi$, we consider an $n$-qubit Hamiltonian $H$ represented by its eigenvalues and eigenstates as
\begin{align}\label{eq:B5}
H=\sum_{i=1}^d E_i|i\rangle\langle i|,    
\end{align}
where $d=2^n$ is the dimension of the Hilbert space, $\{|i\rangle\}$ are the eigenvectors, and $\{E_i\}$ are the energy spectrum  of the Hamiltonian. For simplicity, we suppose the error $E(\Delta t)=E=\sum_{i,j}e_{ij}|i\rangle\langle j|$. We can calculate $\xi$ as
\begin{align}\label{eq:B6}
\xi&=\lim_{T\to\infty}\frac{1}{T}\int_{0}^Te^{-iHt}E e^{iHt}dt\\\label{eq:B7}
&=\sum_{i,j:E_i=E_j}e^{i(E_j-E_i)t}|i\rangle\langle j|e_{ij}.
\end{align}

Now we consider an example of the all-to-all-interacting Heisenberg model: 
\begin{equation}\label{eq:H_all2all_Heisenberg}
    H = \sum_{i<j} \left(X_i X_j + Y_i Y_j + Z_i Z_j \right)
\end{equation}
and analyze the Trotter error in simulating time evolution under $H$. This Hamiltonian corresponds to the long-range-interacting Heisenberg model with the power-law exponent $\alpha=0$. 
The presence of additional symmetries in \cref{eq:H_all2all_Heisenberg} allows us to compute the error terms analytically.
We consider this special model in order to explain why, in \Cref{fig:STcomblong}(b) and \Cref{fig:SymProb}(b) in the main text,  the $\alpha = 0$ model does not undergo a transition in what error term dominates \cref{eq:long-time-error}.

We can simplify our calculations by noticing that the Hamiltonian (\ref{eq:H_all2all_Heisenberg}) can be written in terms of a large spin:
\begin{equation}
    H = 2 ( S_x^2 + S_y^2 + S_z^2 ) - \frac{3}{4} N = H_x+H_y+H_z  - \frac{3}{4} N,
\end{equation}
where $H_{x,y,z}=2 S_{x,y,z}^2$ and the components of the "large spin" are defined sums of Pauli operators: $S_x=\frac{1}{2}\sum_{i} X_i$, $S_y=\frac{1}{2}\sum_{i} Y_i$, $S_z=\frac{1}{2}\sum_{i} Z_i$.
Components of the spin operator satisfy standard commutation relations:
\begin{equation}
    [S_\alpha, S_\beta] = i\epsilon_{\alpha\beta\gamma}S_\gamma.
\end{equation}

To construct the Trotter product formula, we split the Hamiltonian into three terms $H_x$, $H_y$, and $H_z$. 
We first consider the error term for the first-order product formula.
The error has the following structure: 
\begin{equation}
    \mathcal{E}_{q=2} \propto \Delta t^2 \sum_{\mu>\nu} [H_\mu, H_\nu].
\end{equation}
We will show that the second-order error $\mathcal{E}_{q=2}$ commutes with $H$.
Due to the rotational symmetry of the Hamiltonian, we only need to show that $[H, C_1]=0$, where 
\begin{align}\label{eq:xyz}
    &C_1 = [S_x^2, S_y^2] = S_x[S_x, S_y]S_y + [S_x, S_y]S_x S_y + S_y S_x[S_x,S_y] + S_y[S_x,S_y] S_x \nonumber \\
    &=i\left\{4S_x S_z S_y - 2i(S_y^2+S_z^2) + 2i S_x^2\right\} = i\left(4 S_x S_y S_z - 2iH \right).
\end{align}
From Eq. (\ref{eq:xyz}) it follows that we only need to compute $[H, S_x S_y S_z]$. We now separately calculate the contribution to this commutator from each of the three terms in $H$. The contribution from $S_x^2$ reads
\begin{align}
    &[S_x^2, S_x S_y S_z] = S_x [S_x^2, S_y] S_z + S_x S_y [S_x^2, S_z] = S_x \left(S_x[S_x, S_y]S_z + [S_x, S_y] S_x S_z\right) +\nonumber\\
    &S_x S_y \left(S_x[S_x, S_z] + [S_x, S_z]S_x\right) = i\left(S_x^2 S_z^2 + S_x S_z S_x S_z - S_x S_y S_x S_y - S_x S_y^2 S_x \right).
\end{align}
Next, the contribution from $S_y^2$ reads
\begin{align}
    &[S_y^2, S_x S_y S_z] = [S_y^2, S_x] S_y S_z + S_x S_y [S_y^2, S_z] = i\left(-S_y S_z S_y S_z - S_z S_y^2 S_z + S_x S_y^2 S_x + S_x S_y S_x S_y\right).
\end{align}
Finally, the contribution from $S_z^2$ reads
\begin{equation}
    [S_z^2, S_x S_y S_z] = [S_z^2, S_x] S_y S_z + S_x [S_z^2, S_y] S_z = i\left(S_z S_y^2 S_z + S_y S_z S_y S_z - S_x S_z S_x S_z - S_x^2 S_z^2\right).
\end{equation}
Summing up the three contributions we obtain
\begin{equation}\label{eq:H_C1}
    [H, C_1] = [S_x^2+S_y^2+S_z^2, 4i S_x S_y S_z] = 0.
\end{equation}
\Cref{eq:H_C1} implies that the total error for the $N$-step first-order Suzuki-Trotter formula scales linearly with the number of time steps: $\norm{\mathcal{E}_{q=2}(t)} = N \norm{\mathcal{E}(\Delta t)}$.
Similarly, one can show that the second-order error term $\mathcal{E}_{q=3}$ for the second-order product formula \cite{Childs2021Theory},
\begin{align}\label{eq:2orderErr}
\frac{t^3}{24}\sum_{\gamma_1=1}^\Gamma\left[H_{\gamma_1}+2\sum_{\gamma_2=\gamma_1+1}^\Gamma H_{\gamma_2},\left[\sum_{\gamma_3=\gamma_1+1}^\Gamma H_{\gamma_3},H_{\gamma_1}\right]\right],
\end{align}
commutes with the Hamiltonian (\ref{eq:H_all2all_Heisenberg}).
The commutativity of the second-order error term with the all-to-all-interacting spin Hamiltonian (\ref{eq:H_all2all_Heisenberg}) explains the strictly linear growth of the simulation error with the number of Trotter steps in Fig.~\ref{fig:STcomblong}(b) at $\alpha=0$.

We now give the proof of Lemma~\ref{clm:ShortABErrRed} in the main text. We consider a \textit{non-degenerate} Hamiltonian, i.e.~$E_i\neq E_j$ for $i\neq j$. In this case, for a given simulation error $E$, $\xi$ is just the diagonal part of $E$ in the Hamiltonian eigenbasis:
\begin{align}\label{eq:B8}
\xi=\sum_{i=1}^d e_{ii}|i\rangle\langle i|.
\end{align}

To obtain the expression of $\eta$ satisfying $[H,\eta]=E-\xi$, we write $\eta = \sum_{i \neq j}e^{(1)}_{ij}|i\rangle\langle j|$ and find
\begin{align}\label{eq:B9}
e^1_{ij}=\frac{e_{ij}}{E_i-E_j}.
\end{align}

We now consider the weighted average of second-order Trotterizations for non-degenerate bipartite Hamiltonian $H=A+B$. We write the Hamiltonian as $H=\sum_{i=1}^d E_i|i\rangle\langle i|$ and $A, B$ as
\begin{align}\label{eq:B10}
A&=\sum_{i}\alpha_i|i\rangle\langle i|+\sum_{i\neq j}\alpha_{ij}|i\rangle\langle j|,\\\label{eq:B11}
B&=\sum_{i}(E_i-\alpha_i)|i\rangle\langle i|-\sum_{i\neq j}\alpha_{ij}|i\rangle\langle j|.   
\end{align}

Consider two second-order Suzuki-Trotter formulas $U_1$ and $U_2$ given in Eqs.~(\ref{eq:suzukiini1},\ref{eq:suzukiini2}). Referring to the second term of \cref{eq:B1} for $U_1$ and $U_2$ as $\xi_1$ and $\xi_2$, we find
\begin{align}\label{eq:B12}
\xi_1=\xi_2&\propto\sum_i\kappa_{i}|i\rangle\langle i|,\\ \label{eq:B13}
\kappa_i&=\sum_{j\neq i}2\alpha_{ij}\alpha_{ji}(E_j-E_i).
\end{align}
Here, we only consider the leading term and therefore focus on the coefficient matrix for $\Delta t^3$. This result demonstrates that, no matter what non-degenerate Hamiltonian we choose, as long as $A$ and $B$ are not diagonal, we will have $\xi_1\equiv \xi_2$ to the leading order. This means that, by averaging these two product formulas, we can never reduce the leading-order ($\propto \Delta t^3$) coefficient in $\xi$. All the error reduction in this case will be from the reduction in the first term of \cref{eq:B1} by combining $[H,\eta_1]$ and $[H,\eta_2]$ for short-time simulations. 

Suppose that the optimal weight $p$ for the second-order simulation is $p_{\text{opt}}$ and that it results in a reduction of the term $[H,\eta]$.  
If we now consider fourth-order simulation, Eq.~\eqref{eq:STHighApprSimp} implies that the resulting error does not depend on the commuting part ($\xi$) of the second-order error, and is instead entirely determined by the non-commuting part ($[H,\eta]$) of the second-order error. Therefore, the optimal weight $p$ for the fourth-order simulation is the same as the optimal weight $p_{\text{opt}}$ for the second-order simulation, and the corresponding error is reduced by the same factor as the non-commuting part of the second-order error. Since the second-order error also contains the commuting part, which is not reduced, this concludes the proof of part (i) of Lemma~\ref{clm:ShortABErrRed}.

To prove part (ii) of Lemma~\ref{clm:ShortABErrRed}, we first derive expressions of the first term in the second-order error:
\begin{align}
[H,\eta_1]&\propto\sum_{i,j\neq i}[\alpha_{ij}(E_j-E_i)(2E_i-\alpha_i-2E_j+\alpha_j)+\sum_{l\neq i,j}\alpha_{il}\alpha_{jl}(2E_l-E_i-E_j)]\ket{i}\bra{j},\\
[H,\eta_2]&\propto\sum_{i,j\neq i}[\alpha_{ij}(E_j-E_i)(2E_j-\alpha_i-2E_i+\alpha_j)+\sum_{l\neq i,j}\alpha_{il}\alpha_{jl}(2E_l-E_i-E_j)]\ket{i}\bra{j}.
\end{align}
Again, we only consider the leading term and therefore focus on the coefficient matrix for $\Delta t^3$. We consider combining two product formulas of $(2k+4)$th order. By using \cref{eq:STHighApprSimp} $k$ times, we can approximate the simulation error for the two formulas as
\begin{align}
&[H,...,H,[H,\eta_1]]\propto\sum_{i,j\neq i}(E_j-E_i)^{2k}[\alpha_{ij}(E_j-E_i)(2E_i-\alpha_i-2E_j+\alpha_j)+\sum_{l\neq i,j}\alpha_{il}\alpha_{jl}(2E_l-E_i-E_j)]\ket{i}\bra{j},\\
&[H,...,H,[H,\eta_2]]\propto\sum_{i,j\neq i}(E_j-E_i)^{2k}[\alpha_{ij}(E_j-E_i)(2E_j-\alpha_i-2E_i+\alpha_j)+\sum_{l\neq i,j}\alpha_{il}\alpha_{jl}(2E_l-E_i-E_j)]\ket{i}\bra{j}.
\end{align}
As $k$ increases, only the terms with large $(E_j-E_i)$ will matter. At large $k$, the value of a few terms will determine the absolute value of the simulation error, which requires fewer weight parameters to optimize. This observation makes it possible to obtain significant error reduction via averaging over a few product formulas. In the limit $k \rightarrow \infty$, since the largest $(E_j-E_i)$ is unique as the Hamiltonian is non-degenerate, we can ignore all but one entry in $[H,...,H,[H,\eta_1]]$ and $[H,...,H,[H,\eta_2]]$. Therefore, the optimal weights for our approach would converge to a particular value $p^*$ that only depends on these two terms as the order $k$ increases. This concludes the proof of part (ii) of Lemma~\ref{clm:ShortABErrRed} in the main text. Furthermore, in the case when this entry has the opposite sign in the two errors, we can completely eliminate the leading-order error in this limit by adjusting $p$.

\section{Application to classical algorithms for studying  quantum models of infinite size}\label{section:iTEBD}
In this section, we consider applying our averaging approach to classical simulations of quantum models. 
The infinite time-evolving block decimation (iTEBD) algorithm \cite{Vidal2007Classical,Orus2008Infinite,Jordan2008Classical} is widely used to approximate ground-state energies and ground states of one-dimensional quantum models. It is a method based on matrix product states (MPS) \cite{Perez2006Matrix} and can approximate the ground state of an infinite-size one-dimensional chain. 

We consider a one-dimensional translationally-invariant infinite-chain model with sites labeled by $r\in\mathbb{Z}$ and each described by a complex $D_s$-dimensional vector space $V_r\cong\mathbb{C}^{D_s}$. 
Starting with a translationally invariant initial pure state $|\Psi\rangle$ not orthogonal to the desired ground state, we can find the ground state by evolving $|\Psi\rangle$ in imaginary time:
\begin{align}\label{eq:iTEBDG}
|G\rangle=\lim_{\tau\to\infty}e^{-\tau H}|\Psi\rangle,
\end{align}
which requires classically simulating $e^{-\tau H}$.

The iTEBD method represents $|\Psi\rangle$ using an MPS, and we briefly recap here the main idea of this representation. For any site $r$, we denote by $[\leftarrow r]$ and $[r+1\rightarrow]$ the semi-infinite sub-chain of sites $\{-\infty,...,r\}$ and $\{r+1,...,\infty\}$, respectively. We then write the Schmidt decomposition of $\vert\Psi\rangle$ across this partition as
\begin{align}\label{eq:SchDecomp}
|\Psi\rangle=\sum_{\alpha=1}^{D_p}G_{\alpha}^{[r]}|\Psi_\alpha^{[\leftarrow r]}\rangle\otimes|\Psi_\alpha^{[r+1\rightarrow]}\rangle,
\end{align}
where the Schmidt rank $D_p$ is assumed to be finite, $\{G_\alpha^{[r]}\}$ are the Schmidt coefficients, and $\{|\Psi_\alpha^{[\leftarrow r]}\},\{|\Psi_\alpha^{[r+1\rightarrow]}\}$ are basis states. We use a three-index tensor $T^{[r]}$ to relate the Schmidt basis across two neighboring partitions.
Therefore, the initial state $|\Psi\rangle$ can be expanded using the local basis $|i^{[r]}\rangle$ for site $r$ and using $G^{[r]}T^{[r+1]}G^{[r+1]}$:
\begin{align}\label{eq:MPSDecomp}
|\Psi\rangle=\sum_{\alpha,\beta=1}^{D_p}\sum_{i=1}^{D_s}G_\alpha^{[r]}T^{[r+1]}_{i\alpha\beta}G_\beta^{[r+1]}|\Psi_\alpha^{[\leftarrow r]}\rangle|i^{[r]}\rangle|\Psi_\alpha^{[r+2\rightarrow]}\rangle.
\end{align}
We can then decompose $|\Psi\rangle$ using local bases for sites $\{r+1,r+2\}$ in terms of $G^{[r]}T^{[r+1]}G^{[r+1]}T^{[r+2]}G^{[r+2]}$ and so on. By repeatedly exploiting this decomposition, we can derive an MPS representation for $|\Psi\rangle$.
\begin{figure}
    \centering
    \includegraphics[width=0.8\textwidth]{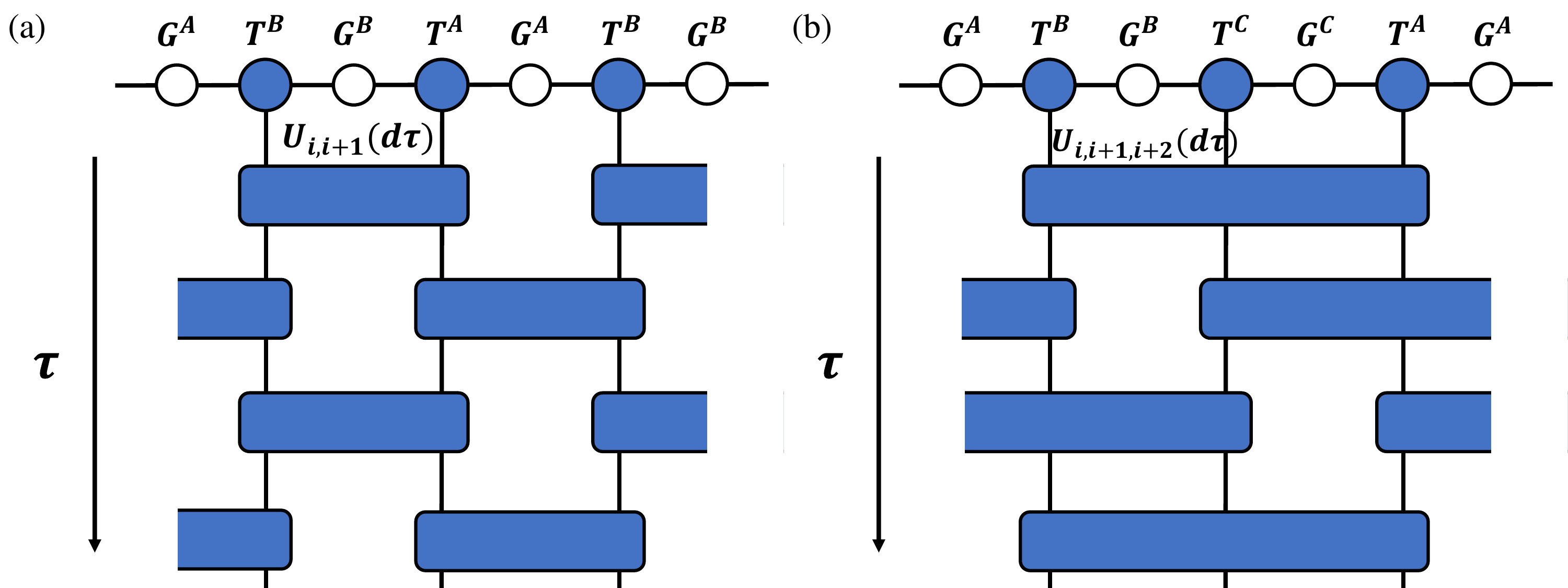}
    \caption{An illustration of (a) $2$-site and (b) $3$-site iTEBD algorithm using the first-order product formula.}
    \label{fig:iTEBDMPS}
\end{figure}

Notice that directly implementing the evolution using the entire Hamiltonian $H$ is impossible for MPS. We thus implement the imaginary-time evolution $e^{-\tau H}$ using digital simulation. A direct method is to split the Hamiltonian and apply the first-order product formula. As shown in Fig.~\ref{fig:iTEBDMPS}(a), in case of a two-body nearest-neighbor Hamiltonian $H$, we can split $H$ into a block of terms $H_{i,i+1}$ with even $i$ ($H_{\text{even}}$) and a block of terms $H_{i,i+1}$ with odd $i$ ($H_{\text{odd}}$). Therefore, for a fixed step size $d\tau$, the digital simulation of $e^{-\tau H}$ at $\tau= Nd\tau\to\infty$ takes the form
\begin{align}\label{eq:2-siteiTEBD}
\lim_{\tau\to\infty}e^{-\tau H}\approx\prod_{N\to\infty}\left(\prod_{i=\text{odd}}U_{i,i+1}(d\tau)\right)\left(\prod_{i=\text{even}}U_{i,i+1}(d\tau)\right)=\prod_{N\to\infty}\left(\prod_{i=\text{odd}}e^{-d\tau H_{i,i+1}}\right)\left(\prod_{i=\text{even}}e^{-d\tau H_{i,i+1}}\right),
\end{align}
where $U_{i,i+1}(d\tau)=e^{-d\tau H_{i,i+1}}$ and $N$ is the number of steps for the digital simulation. This method is known as a $2$-site 
iTEBD method. In each iteration corresponding to time step $d\tau$, we contract  $G^{[r]}T^{[r+1]}G^{[r+1]}$ and apply a block $U_{i,i+1}(d\tau)$. We then use singular-value decomposition to decompose the tensor into the new $G^{[r]}T^{[r+1]}G^{[r+1]}$.

Similarly, a Hamiltonian composed of terms $H_{i,i+1,i+2}$ acting on triples of adjacent sites can be decomposed into three parts according to the value of $i$ modulo $3$, as shown in Fig.~\ref{fig:iTEBDMPS}(b). This method is known as the $3$-site iTEBD method.
Compared to the $2$-site algorithm, two singular-value decompositions are required to decompose the tensor into a product of five smaller tensors. 
While we presented the example of a first-order product formula, imaginary time evolution can also be realized using higher-order product formulas. 

Now we consider constructing NUSCs by averaging different Suzuki-Trotter product formulas. When calculating the ground-state energy via $\Tr(H|G\rangle\langle G|)/\Tr(|G\rangle\langle G|)$ with $|G\rangle=\lim_{\tau\to\infty}e^{-\tau H}|\Psi\rangle$, we can represent $|G\rangle\langle G|$ as a weighted average of different simulations $U_1,...,U_M$ as
\begin{align}\label{eq:comiTEBD}
|G\rangle\langle G|=\sum_{m=1}^M p_m U_m|\Psi\rangle\langle\Psi|U_m^\dagger,
\end{align}
where $\{p_1,...,p_M\}$ are the weights. Since the iTEBD algorithm is a classical computational method, we can directly take the average of the unitaries $U_m$, which can be regarded as a non-unitary operation on the initial state $|\Psi\rangle$:
\begin{align}\label{eq:comiTEBDNU}
|G\rangle=\sum_{m=1}^Mp_mU_m|\Psi\rangle.
\end{align}

To study the performance of the averaging technique, we perform a numerical experiment on both two-body Hamiltonians and three-body Hamiltonians. For the $2$-site iTEBD method, we consider a Heisenberg model with nearest-neighbor $2$-body terms:
\begin{align}\label{eq:HeisenbergLoc}
H=\sum_{i}\left(X_iX_{i+1}+Y_{i}Y_{i+1}+Z_iZ_{i+1}\right),
\end{align}
where $X_i,Y_i, Z_i$ are Pauli operators on spin $i$. We exploit an algorithm similar to that in Ref.~\cite{Vidal2007Classical} and decrease time-step size $d\tau\in\{0.1,0.01,0.001\}$. 
As we expect the ground state as the output of the algorithm, the converged state should remain unchanged after each iteration $d\tau$. We measure the distance between the states before and after each iteration and regard it as the error for each iteration. Once this distance is below a certain threshold, we say the algorithm has converged at the current time-step size and we decrease the time-step size. Otherwise, we enter the next iteration. We count the total number of iterations required for the algorithm to converge for all $d\tau\in\{0.1,0.01,0.001\}$. We set the threshold value as $10^{-10}$ in the following numerical experiments. 
As shown in Fig.~\ref{fig:iTEBD}(a), we compare the number of iterations that the algorithm requires to convergence for first, second, and fourth-order standard Suzuki-Trotter formulas and for the unweighted average ($p = 0.5$)---using \cref{eq:comiTEBD}---of two first- and second-order Suzuki-Trotter formulas corresponding to the two orderings of $H_{\text{odd}}$ and $H_{\text{even}}$. In Fig.~\ref{fig:iTEBD}(b), we perform a similar numerical experiment using a linear combination of unitaries [\cref{eq:comiTEBDNU}] instead of taking a mixture of states [\cref{eq:comiTEBD}]. We observe that, by averaging different simulations, one can reduce the required iteration number by a factor of three.
\begin{figure}
    \centering
    \includegraphics[width=0.99\textwidth]{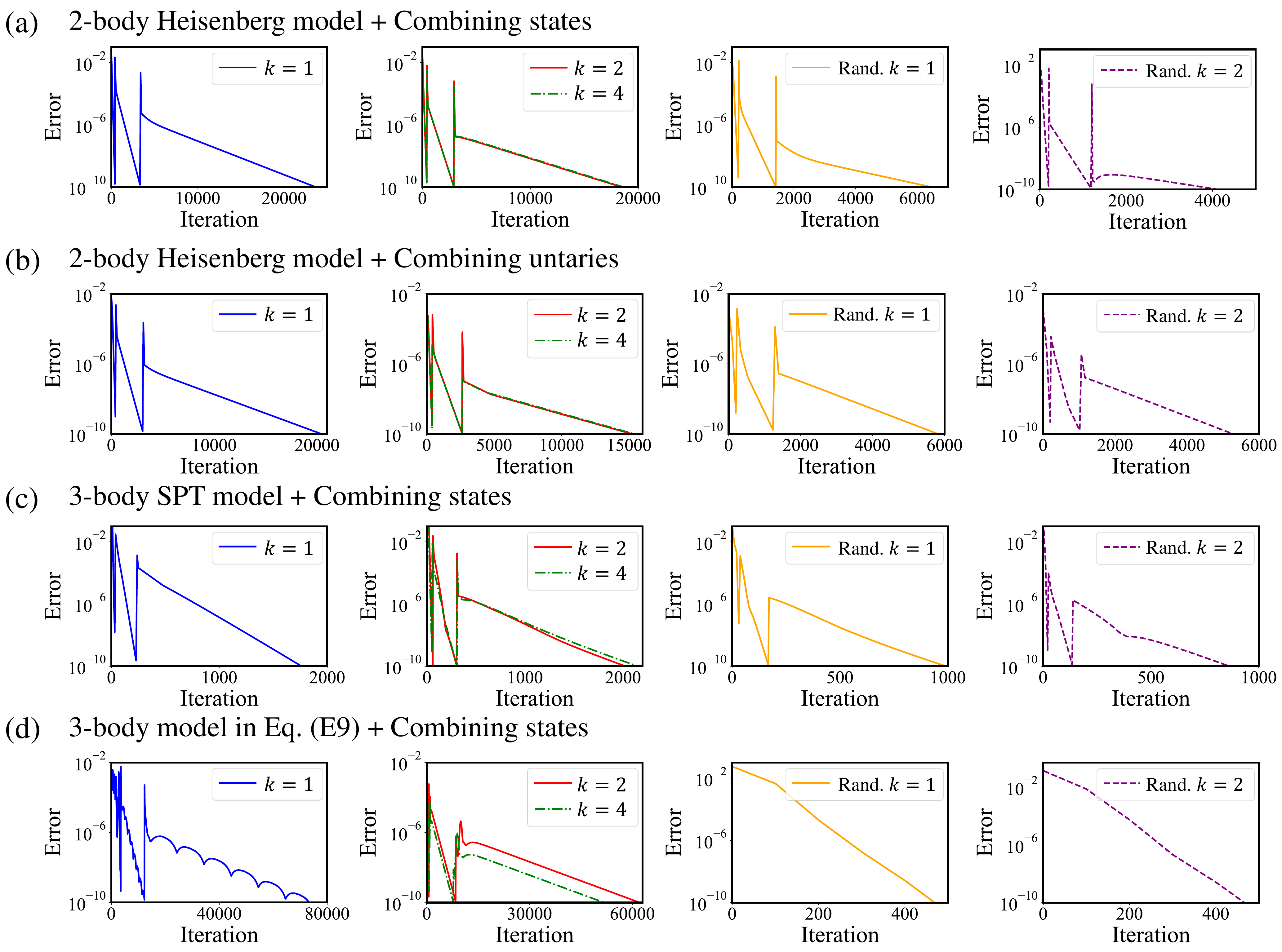}
    \caption{Application of averaging simulations in iTEBD methods with equal weight $p_1=p_2=0.5$ for two first- and second-order Suzuki-Trotter formulas with permuted sequences.} (a) Using $2$-site iTEBD methods to calculate the ground state of the Hamiltonian in Eq.~\eqref{eq:HeisenbergLoc}. We compute the error as a function of iteration number for first- ($k=1$), second- ($k=2$), and fourth-order ($k=4$) Suzuki-Trotter formulas and for combining two first-order (labeled ``Rand.~$k=1$'') and two second-order (labeled ``Rand.~$k=2$'') simulations via Eq.~\eqref{eq:comiTEBD}. (b) Same as (a), but using a linear combination of unitaries [\cref{eq:comiTEBDNU}].  
    (c) Same as (a), but using the $3$-site iTEBD method to calculate the ground state of the Hamiltonian in Eq.~(\ref{eq:SPT}) with $J_i=h_i=-V_i=1$. 
    (d) Same as (c), but for the Hamiltonian in \cref{eq:1Dspin}.
    \label{fig:iTEBD}
\end{figure}

For the $3$-site iTEBD method, we observe a similar speedup in the convergence of the iTEBD algorithm. Specifically, we consider the family of Hamiltonians
\begin{align}\label{eq:SPT}
H=-\sum_i(J_iZ_{i-1}X_iZ_{i+1}+V_iX_iX_{i+1}+h_iX_i),
\end{align}
specified by parameters $J_i$, $V_i$, and $h_i$. This family of Hamiltonians is used in the study of topological edge states defined at arbitrarily high energies \cite{Bahri2015Localization} and in the study of Floquet symmetry-protected topological (SPT) phases  \cite{Deng2021Observation,Jeyaretnam2021Quantum}. The Hamiltonian in \cref{eq:SPT} has a $\mathbb{Z}_2\times\mathbb{Z}_2$ symmetry. In the extreme case when $V_i=h_i=0$, the model is exactly solvable by mapping to free fermions, and the eigenstates of \cref{eq:SPT} are the mutual eigenstates of the stabilizers $Z_{i-1}X_iZ_{i+1}$. These eigenstates are SPT states and the ground states are called cluster states. When $V_i,h_i\neq 0$, the single- and two-body terms make the ground states deviate from cluster states, and the model can not be solved exactly. In Fig.~\ref{fig:iTEBD}(c), we fix the parameters to be $J_i=h_i=-V_i=1$ and apply the $3$-site iTEBD method to calculate the ground-state energy and the ground-state wavefunction. In particular, we plot the error for the ground state (the trace distance between the output state and the actual ground state) as a function of the number the iterations. In order to perform averaging, we combine with equal weight ($p=0.5$) via Eq.~(\ref{eq:comiTEBD}) two product formulas, such that the first product formula has the term ordering as in \cref{eq:SPT} and the second product formula has the order of the terms reversed. We see that, using the averaging technique, we achieve a $50\%$ reduction in the iteration number required to converge to the ground state. 

It is worth mentioning that, in some cases, we can get more dramatic reductions in the number of iterations required to converge.  As an example, consider the following one-dimensional spin model with $3$-body terms:
\begin{align}\label{eq:1Dspin}
H=-\sum_i(Z_{i-1}X_{i}Z_{i+1}+Y_i).
\end{align}
In Fig.~\ref{fig:iTEBD}(d), we use the $3$-site iTEBD method to approximate the ground state and the ground-state energy for this model. For the averaging technique, we again combine with equal weight ($p=0.5$) via Eq.~(\ref{eq:comiTEBD}) two product formulas at both order $k=1$ and order $k=2$ with the second one reversing the order of the three Hamiltonian terms. While the algorithms that exploit standard (not averaged) Suzuki-Trotter formulas require more than $5\times 10^4$ steps even for fourth-order formulas, a simple equally-weighted combination of two first-order product formulas requires fewer than $500$ steps to converge to the final energy, which is a two-orders-of-magnitude improvement compared to the unaveraged algorithm. 

\section{Derivation of step and sample complexity for long-time simulations}\label{section:ConvRandSam}
In \Cref{sec:TermOrdering}, we have shown that, by applying permutations of Hamiltonian terms and using weighted averaging, we can reduce the quantum simulation error. 
In particular, we demonstrated how this approach works in the case where we choose only a few permutations. By choosing a larger number of contributing permutations, we expect an increasing ability to reduce the simulation error.  
In particular,  in \Cref{sec:TermOrdering}, we show that averaging over the complete permutation group with equal weights for a Hamiltonian composed of $\Gamma$ terms provides a reduction in the asymptotic leading error term by a factor of $\Gamma$. This approach is immediately applicable to Hamiltonians with a small number of terms. However, the size of the permutation group grows exponentially with the number of Hamiltonian terms, and averaging a large number of permuted product formulas requires more resources. To address this issue, we mention in \Cref{sec:TermOrdering} an alternative approach: to randomly sample all the product formulas in the permutation group with equal weights. In \Cref{sec:TermOrdering}, we also give an upper bound on the statistical error from sampling a limited number of permutations. In this appendix, we prove this bound. The resulting bound also holds for \Cref{sec:Sym}, where we are averaging over a large number of symmetries in the symmetry group with equal weights.  For technical simplicity, we consider the spectral norm in this section, which is different from the Frobenius norm considered in the main text. 

We assume that the simulation $U_m(N,\Delta t)$ is an $N$-step simulation with the simulation block $U_m(\Delta t)$ in each step being a product formula with leading-order error $O(t^q)$. Consider running our algorithm $T$ times (corresponding to $T$ different choices of $m$ in $U_m(N,\Delta t)$) and averaging (with equal weights) the results of observable measurements afterward. The step size should be small, $\Delta t=t/N<<1$, while $t$ could have an arbitrary value. The simulation error for a particular input state $\rho$ can be written as $ (1/T) \sum_j U_{(j)} \rho U^\dagger_{(j)}-V \rho V^\dagger$, where $V$ is the ideal evolution. Defining $\mathcal{U}(\rho)=(1/T) \sum_j U_{(j)} \rho U^\dagger_{(j)}$ and $\mathcal{V}(\rho)=V\rho V^\dagger$, we can write the spectral norm of the simulation error as
\begin{align}
\norm{\mathcal{U}(\rho)-\mathcal{V}(\rho)}&\leq\norm{\mathcal{U}(\rho)-\mathcal{V}(\rho)}_1\nonumber\\
&\leq\max_{\rho}\norm{\mathcal{U}(\rho)-\mathcal{V}(\rho)}_1\nonumber\nonumber\\
&\leq\max_{\varphi\in\mathcal{C}^{4^n\times 4^n}}\norm{(\mathbbm{I}\otimes\mathcal{U})(\varphi)-(\mathbbm{I}\otimes\mathcal{V})(\varphi)}_1\nonumber\\
&=\norm{\mathcal{U}-\mathcal{V}}_{\diamond}\nonumber\\
&\leq 2\norm{\frac{1}{T}\sum_j U_{(j)}-V}
\end{align}
where $\norm{\cdot}_1$ is the trace norm, $\norm{\cdot}_{\diamond}$ is the diamond norm between quantum channels, and $\mathbbm{I}\otimes\mathcal{U}(\cdot)$ is the channel that performs the identity channel on the first $n$ qubits and $\mathcal{U}(\cdot)$ on the remaining $n$ qubits. Here, the first line follows from the definition of the norms, the third line uses the fact that $\norm{(\mathbbm{I}\otimes\mathcal{U})\left(\frac{I}{2^n}\otimes\rho\right)-(\mathbbm{I}\otimes\mathcal{V})\left(\frac{I}{2^n}\otimes\rho\right)}_1=\norm{\mathcal{U}(\rho)-\mathcal{V}(\rho)}_1$ for any $\rho$, and the last line follows from Lemma~3.4 of Ref.~\cite{Chen2021ConcentrationPRX}. We decompose $(1/T) \sum_j U_{(j)}-V$ into the expectation bias $\mathbb{E}[U_{(j)}]-V$ and the fluctuation error represented as
\begin{equation}\label{eq:SampNErr}
E_{\text{Fluc}}=\frac 1T\sum_{j=1}^TU_{(j)}(N,\Delta t)-\mathbb{E}\left[U_m(N,\Delta t)\right],
\end{equation}
where $U_{(j)}$ is the $j$-th random sample of our algorithm (i.e.\ one of the $U_m(\Delta t)$). The following theorem asserts the sample ($T$) and step ($N$) complexity required to bound $\norm{E_{\text{Fluc}}}_2$. We remark that this sample complexity does not consider the number of samples that are required to converge to expectation values of observables. 

\begin{theorem}\label{thm:H}
Consider simulating the dynamics $V=e^{-iHt}$ of an $n$-qubit Hamiltonian $H=\sum_{j=1}^\Gamma H_j$. 
Pick $\epsilon, \delta > 0$ and consider drawing $T$ random simulations $U_{(1)},\ldots,U_{(j)},\ldots,U_{(T)}$ each of $N$ steps and $\Delta t=t/N\ll 1$ such that
\begin{equation}\label{eq:SamComp}
T = \Omega \left(\frac{\gamma t^q}{N^{(2q-1)/2}\epsilon}\left(n+\log\left(\frac1\delta\right)\right)^{1/2}\right),
\end{equation}
where $\gamma = \norm{E_{m}^{(q)}}_2+\norm{\xi_{m}^{(q)}}_2$, $E_{m}^{(q)}$ is the error matrix for each time step, and $\xi_{m}^{(q)}$ is the part of $E_{m}^{(q)}$ that commutes with $H$, following notation from the main text. Then with probability at least $1-\delta$, the fluctuation error $||E_{\text{Fluc}}||_2$ is bounded above by $\epsilon$. Equivalently, if we fix $T$ and set the step number to be
\begin{equation}\label{eq:StepComp}
N= \Omega  \left(\frac{\gamma^{2/(2q-1)}t^{2q/(2q-1)}}{(T\epsilon)^{2/(2q-1)}}\left(n+\log\left(\frac1\delta\right)\right)^{1/(2q-1)}\right),
\end{equation}
then the fluctuation error $\norm{E_{\text{Fluc}}}_2$ is bounded above by $\epsilon$ with probability at least $1-\delta$.
\end{theorem}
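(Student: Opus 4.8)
The statement is a matrix-concentration bound, so the plan is to recognize $E_{\text{Fluc}}$ as a rescaled sum of independent, identically distributed, mean-zero random matrices and invoke a matrix Bernstein (or Freedman-type) inequality of the kind used in Refs.~\cite{Tropp2015Introduction,Chen2021ConcentrationPRX}. First I would write $E_{\text{Fluc}}=\frac1T\sum_{j=1}^T X_{(j)}$ with $X_{(j)}=U_{(j)}(N,\Delta t)-\mathbb E_m[U_m(N,\Delta t)]$. Since the $X_{(j)}$ are not Hermitian, I would pass to the Hermitian dilation $\bigl(\begin{smallmatrix}0&X_{(j)}\\ X_{(j)}^\dagger&0\end{smallmatrix}\bigr)$, which preserves the operator norm; the scalar matrix Bernstein inequality then gives $\mathbb P(\|E_{\text{Fluc}}\|\ge\epsilon)\le 4\cdot 2^n\exp\!\bigl(-T\epsilon^2/(2\sigma^2+\tfrac23 R\epsilon)\bigr)$, where $R$ bounds $\|X_{(j)}\|$ almost surely and $\sigma^2$ bounds both $\|\mathbb E[X_{(j)}X_{(j)}^\dagger]\|$ and $\|\mathbb E[X_{(j)}^\dagger X_{(j)}]\|$.

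The core of the argument is to estimate $R$ and $\sigma^2$ from the long-time error structure established earlier. Using the $N$-step expansions \cref{eq:LongUmExpansion,eq:LongTimeErrorRep}, one writes $U_m(N,\Delta t)-V$ as a part $\propto N\Delta t^q\,\xi_m^{(q)}$ that commutes with $H$, plus a non-commuting part that telescopes (the $\sum_j\to\int$ argument of \cref{section:ErrDecomp}) to a contribution of size $O(\Delta t^{q-1}\|\eta_m^{(q)}\|)$. Since $X_{(j)}$ only registers the fluctuation of these pieces about their ensemble means, and since $\gamma=\|E_m^{(q)}\|_2+\|\xi_m^{(q)}\|_2$ controls both the commuting part and the commutator $[H,\eta_m^{(q)}]=E_m^{(q)}-\xi_m^{(q)}$, I expect bounds of the form $R\lesssim\gamma\,\Delta t^q N^{1/2}$ and $\sigma^2\lesssim\gamma^2\Delta t^{2q}N$; throughout one keeps only the leading order in $\Delta t$ and must check that the discarded $O(\Delta t^{q+1})$ remainders in the Taylor and telescoping expansions stay below $\epsilon$ in the regime $\Delta t=t/N\ll1$.

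The final step is arithmetic: setting the Bernstein tail equal to $\delta$ gives $T=\Omega\!\bigl((\sigma^2/\epsilon^2+R/\epsilon)\log(2^n/\delta)\bigr)$; substituting $\log(2^n/\delta)=\Theta(n+\log(1/\delta))$, the estimates for $R$ and $\sigma^2$, and $\Delta t=t/N$, and simplifying, reproduces \cref{eq:SamComp}, while solving the same inequality for $N$ at fixed $T$ gives \cref{eq:StepComp}.

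The main obstacle will be the second step---pinning down the $N$-dependence of $R$ and $\sigma^2$, i.e.\ showing that the \emph{fluctuating} part of the $N$-step error scales like $\Delta t^q N^{1/2}$ rather than the naive $\Delta t^q N$. This is where the error decomposition \cref{eq:ErrDecomp} does the work: the commuting part $\xi_m^{(q)}$ of the per-step error, which is what grows linearly with $N$, is largely fixed by $H$ (for instance, it is identical across the orderings of $A$ and $B$, as shown in \cref{section:ErrDecomp}), so it contributes to the bias $\mathbb E[U_m(N,\Delta t)]-V$ rather than to the variance, leaving only the bounded, non-commuting contribution to drive the sampling fluctuation. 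Verifying this carefully, and confirming that neglecting the higher-order remainders is self-consistent with the claimed bound, is the technically delicate part; the concentration inequality and the final inversion are then routine.
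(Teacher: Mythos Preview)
Your high-level strategy---write $E_{\text{Fluc}}$ as a mean of i.i.d.\ centered matrices and invoke matrix Bernstein---is exactly what the paper does. But the ``main obstacle'' you identify is not one, and your proposed resolution of it does not hold up.

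The paper does \emph{not} try to show that the fluctuating part of the $N$-step error scales like $\Delta t^q N^{1/2}$. It simply bounds each summand by the triangle inequality through $V$:
\[
\|X_j\|=\tfrac1T\bigl\|U_{(j)}-\mathbb E[U_m]\bigr\|\le\tfrac1T\Bigl(\|U_{(j)}-V\|+\|V-\mathbb E[U_m]\|\Bigr)\le\tfrac{N}{T}\bigl(\|E_m^{(q)}\|+\|\xi_m^{(q)}\|\bigr)\Delta t^q,
\]
using the naive accumulation bound $\|U_{(j)}-V\|\le N\|E_m^{(q)}\|\Delta t^q$ for the first term and the long-time expansion \cref{eq:LongUmExpansion} for the second. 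This gives $R=\gamma t^q/(TN^{q-1})$, \emph{linear} in $N$; no cancellation is invoked. The stated scaling in \cref{eq:SamComp} then drops out of the Bernstein exponent after substituting this $R$ and $\Delta t=t/N$.

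Your route, by contrast, rests on the claim that the commuting part $\xi_m^{(q)}$ is ``largely fixed by $H$'' and therefore contributes only to the bias. The only evidence for this in the paper is the special computation $\xi_1=\xi_2$ for $\Gamma=2$ at second order; nothing guarantees it for general $\Gamma$ or higher-order formulas, which is the setting of the theorem. And even if that cancellation did hold, the surviving non-commuting piece is bounded independently of $N$ (it is $O(\Delta t^{q-1}\|\eta_m^{(q)}\|)$, cf.\ \cref{eq:long-time-error}), so your argument would yield $R=O(1)$, not $R\sim\gamma\Delta t^q N^{1/2}$. The $N^{1/2}$ appears to be reverse-engineered from the answer rather than derived; plugging your stated $\sigma^2\sim\gamma^2\Delta t^{2q}N$ into Bernstein actually gives $T\gtrsim\gamma^2 t^{2q}(n+\log(1/\delta))/(N^{2q-1}\epsilon^2)$, which is the \emph{square} of \cref{eq:SamComp}, not \cref{eq:SamComp} itself.

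In short: drop the error-decomposition detour, use the crude linear-in-$N$ bound on $\|X_j\|$ via the triangle inequality, and the rest is the routine Bernstein arithmetic you already outlined.
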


\begin{proof}
The proof of \Cref{thm:H} follows from the matrix Bernstein inequality  \cite{Tropp2015Introduction,Tropp2011Freedman,Gross2011Recovering}. We introduce several mathematical results regarding matrix concentration inequalities. Suppose $X_1,...,X_N$ are independent and identically distributed (i.i.d.) random variables. According to the strong law of large numbers, the sample mean $1/N\sum_{i=1}^NX_i$ converges to the expectation $\mathbb{E}[X_i]$ for large $N$. For random matrices $X_1,...,X_N$, the following matrix Bernstein inequality holds \cite{Tropp2015Introduction}:

\begin{lemma}[matrix Bernstein inequality~\cite{Tropp2015Introduction,Gross2011Recovering,Oliveira2009Concentration,Pinelis1999Optimum,Tropp2011Freedman}] Consider a set of $d\times d$ random matrices $\{X_1,...,X_T\}$ with $\mathbb{E}[X_i]=0$ and $||X_i||_2\leq R$ for all $i$. Then for any $\tau>0$, we have
\begin{equation}\label{eq:bernstein}
\textbf{Pr}\left[\norm{\sum_{i=1}^T X_i}_2\geq \tau\right]\leq 2d\exp\left(\frac{-\tau^2/2}{TR^2+R\tau/3}\right). 
\end{equation}
\end{lemma}

Now we continue to prove \Cref{thm:H}. We first prove the sample complexity result given in \cref{eq:SamComp}.  We denote $X_j=\frac 1T U_{(j)}-\frac 1T(\mathbb{E}[U_m])$ for $j=1,\ldots,T$, which satisfies $\mathbb{E}[X_j]=0$. Moreover, the spectral norm of each random variable $X_j$ is bounded by
\begin{align}\label{eq:boundX}
\norm{X_j}_2&=\frac{1}{T}\norm{U_{(j)}-\left(\mathbb{E}[U_m]\right)}_2\\
&\leq\frac{1}{T}\norm{\mathbb{E}[U_m]-V}_2+\norm{V-U_{(j)}}_2\\
&\leq \frac{N}{T}\max_{m}\left[\norm{E_{m}^{(q)}}_2+\norm{\xi_{m}^{(q)}}_2\right]t^q/N^q.
\end{align}
In the second line, we used the triangle inequality. The first term in the second line is bounded by $N\norm{\xi_m^{(q)}}_2(t/N)^q$ according to Eq.~\eqref{eq:LongUmExpansion} in the main text if we only consider the leading-order error. The second term in the second line is bounded by $N\norm{E_m^{(q)}}_2$ using the assumption that $U_m$ is a product formula with leading order $O(t^q)$. Thus we obtain the bound in the third line. We apply the matrix Bernstein inequality to the sum $\frac 1T\sum_{j=1}^TU_{(j)}-(\mathbb{E}[U_m])=\sum_{j=1}^T X_j$ with $R=\gamma t^q/TN^{q-1}$. Therefore, given $\epsilon\leq NR$, the probability that $\norm{\frac 1T\sum_{j=1}^TU_{(j)}-(\mathbb{E}[U_m])}_2\geq\epsilon$ can be bounded by
\begin{align}\label{eq:boundSamp}
\textbf{Pr}\left[\norm{\sum_{i=1}^T X_i}_2\geq \epsilon\right]&=\textbf{Pr}\left[\norm{\frac 1T\sum_{j=1}^TU^{N,j}...U^{1,j}-(\mathbb{E}[U_m])^N}_2\geq\epsilon\right]\\
&\leq 2d\exp\left(\frac{-\epsilon^2/2}{NR^2+R\epsilon/3}\right)\\
&\leq 2d\exp\left(\frac{-3\epsilon^2}{8NR^2}\right)\\\label{eq:boundSampN}
&=2d\exp\left(\frac{-3T^2N^{2q-1}\epsilon^2}{8\gamma^2t^{2q}}\right).
\end{align}
Therefore, if we choose $T\geq\Theta \left(t^q\gamma\epsilon^{-1}N^{-(2q-1)/2}\left(n+\log( 1/\delta)^{1/2}\right)\right)$, then we have $\norm{\frac 1T\sum_{j=1}^TU_{(j)}-(\mathbb{E}[U_m])}_2\leq\epsilon$ with probability at least $1-\delta$. 

For the step complexity when we fix $T$, it follows from Eq.~\eqref{eq:boundSampN} that, if $N\geq\Theta\left(\gamma^{2/(2q-1)}t^{2q/(2q-1)}\left(n+\log(1/\delta)\right)^{1/(2q-1)}/(T\epsilon)^{2/(2q-1)}\right)$, then $\norm{\frac 1T\sum_{j=1}^TU_{(j)}-(\mathbb{E}[U_m])}_2\leq\epsilon$ with probability at least $1-\delta$.
\end{proof} 

The sample complexity given by  Eq.~\eqref{eq:SamComp} has polynomial dependence on $t$ and $n$. Moreover, when the step size $\Delta t=t/N$ is fixed, the sample complexity is $\Theta\left(N^{1/2}\gamma\Delta t^{q}\left(n+\log(1/\delta)\right)^{1/2}/\epsilon\right)$. This means that, when step size $\Delta t$ is fixed, the sample complexity scales as $T\propto t^{1/2}$. On the other hand, if we replace $\Delta t$ with $t/N$, we find that sample complexity scales as $N^{-(2q-1)/2}$. This means that, for fixed $t$, we can reduce sample complexity by increasing the step number $N$ (i.e.\ by decreasing the step size $\Delta t$). Alternatively, we can also bound the fluctuation error with high probability even with $T=1$ random sequence (i.e.\ no averaging), provided that the step complexity $N$ is large enough to satisfy Eq.~\eqref{eq:StepComp}. In this extreme case, the contributing simulation error itself is very small, which makes the fluctuation error bounded by $\epsilon$. Compared with the step complexity of qDRIFT, which is $\Theta\left(t^2\gamma^2\left(n+\log(1/\delta)\right)/\epsilon^2\right)$ \cite{Chen2021ConcentrationPRX}, our step complexity has more favorable dependence on $t,\epsilon,\gamma,\left(n+\log(1/\delta)\right)$ as we employ higher-order ($q > 1$) digital quantum simulations in each step. In particular, when the simulations $U_{(1)},...,U_{(T)}$ are chosen as product formulas with large $q$, our step complexity approaches $\Theta\left(\gamma^{1/q}t\left(n+\log(1/\delta)\right)^{1/(2q)}/\epsilon^{1/q}\right)$, which is linear in $t$. It is worth mentioning that this algorithm is scalable to larger systems at the price of an $\Theta\left(n^{1/2}\right)$ increase in sample complexity or an $\Theta\left(n^{1/(2q)}\right)$ increase in step complexity. Notice that, since step complexity $N$ scales as $T^{-2/(2q-1)}$, increasing sample number $T$ is not an effective way to reduce step complexity at large $q$. Finally, we remark again that the sample complexity considered in this paper doesn't take into account the experimental fact that we have to repeat the experiment multiple times to get expectation values of observables.

\section{Symmetry-based error reduction for the long-range Heisenberg chain}\label{section:Equiv}
\begin{figure}
    \centering
    \includegraphics[width=0.99\textwidth]{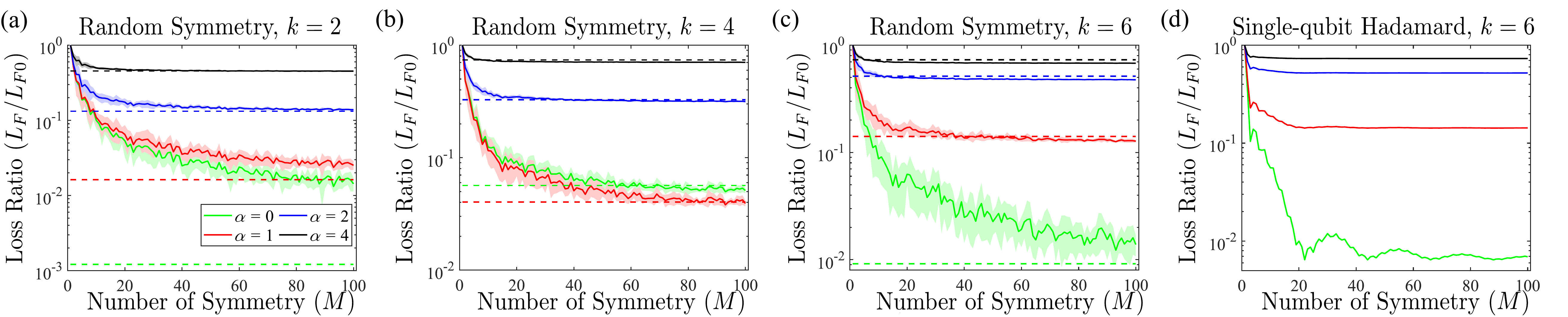}
    \caption{
    (a) We employ the second-order ($k=2$) product formula as $U_1$ and plot the error reduction as a function of the number of Haar-randomly chosen symmetry transformations at $t=100$ and different power-law exponents $\alpha=0,1,2,4$. The dashed horizontal line represents the error reduction achieved by combining $3!$ second-order product formulas with all possible orderings of Hamiltonian terms  $H_X$, $H_Y$, and $H_Z$. (b) Same as (a), but for the $4$th-order ($k=4$) product formula. (c) Same as (a), but for the $6$th-order ($k=6$) product formula. (d) We use the symmetry operator $\sum_{i=1}^n H_i$ and construct $C_m=\text{exp}(i(m) \bigoplus_{i=1}^n H_i\Delta)$ for $m=0,...,M-1$ and $\Delta=0.01$, where $H_i$ is the Hermitian Hadamard matrix acting on qubit $i$. The initial simulation is the $6$th-order product formula. We plot the error reduction as a function of the number of symmetry transformations at different power-law exponents $\alpha$.}
    \label{fig:symnum}
\end{figure}
In this section, we first provide additional numerical results for the symmetry protection approach in \Cref{sec:Sym}. We then prove Eq.\ (\ref{eq:SymReduce}) in the main text and provide numerics illustrating this equation.

In Fig.~\ref{fig:symnum}(a,b,c), we consider the Heisenberg spin chain with power-law interactions [see Eq.\ (\ref{eq:PowerHeisenberg}) in the main text]. We choose symmetry transformations Haar-randomly and observe that, as the number of symmetry transformations $M$ increases, the final error reduction (solid lines) approaches an asymptotic value (dashed lines) given by the uniform mixture of $3!$ $k$th-order Suzuki-Trotter formulas with all possible term orderings of $H_X$, $H_Y$, and $H_Z$. 
For the case of $k=1$, the leading-order (second-order) error only contains commutators $[H_X,H_Y]$, $[H_X,H_Z]$, and $[H_Y,H_Z]$. By averaging over all $6$ sequences, we can eliminate the second-order error term. Averaging over Haar-randomly chosen symmetry transformations also reduces the leading-order error to zero as the commutator of the form $[H_X,H_Y]$ contains no quadratic terms such as $X^2$ where $X=\sum_{i=1}^n X_i$.
Yet, we were not able to rigorously prove the equivalence between the two approaches in the general case for $k\geq 2$ due to the non-isotropic structure of the high-order error terms.

We now prove Eq.\ (\ref{eq:SymReduce}) in the main text and provide numerics illustrating this equation. We consider a specific choice of the Hamiltonian-like operator $O=\sum_{i=1}^n H_i$, where $H_i$ is the Hadamard matrix acting on qubit $i$, that generates global rotations around the axis defined by $\mathbf{n}=\frac{1}{\sqrt{2}}(1,0,1)$.  Constructing the group elements as $C_m=\text{exp}(i m O\Delta)$ for $m=0,...,M-1$ such that $\norm{O}_F\Delta\ll 1$ and at least on eigenvalue of $O \Delta$ is an irrational multiple of $\pi$, we see that the resulting group has infinite size since the rotation angles $\phi_m$ are irrational multiples of $\pi$. In a generic case, %when the symmetry transformation $e^{i O}$ does not correspond to a permutation of Hamiltonian terms ($X\leftrightarrow Y\leftrightarrow Z$), and 
when the symmetry group is infinite (the phases are irrational multiples of $\pi$), it is beneficial to have large $M$ because this reduces the error. For an infinite symmetry group $\{C_m=\text{exp}(i m O\Delta)\}_{m=0}^{\infty}$ %such that $e^{i O}$ does not correspond to a permutation of Hamiltonian terms 
and for a given simulation error $E$, we decompose the error as $E=[O,\eta_C]+\xi_C$ similar to \cref{eq:ErrDecomp} in the main text, where $[O,\eta_C]$ ($\xi_C$) is the part of the error that does not commute (commutes) with $O$. We have
\begin{align}
E_q^{\text{sym}}
&=\frac{1}{M}\sum_{i=0}^{M-1} e^{-imO\Delta}Ee^{i(m-1)O\Delta}
\approx\frac{1}{M}\int_{0}^{(M-1)\Delta}e^{-iOt'}\frac{E}{\Delta}e^{iOt'}dt'\\
&=\xi_C+\frac{1}{M\Delta}\left[e^{-i(M-1)O\Delta}\eta_Ce^{i(M-1)O\Delta}-\eta_C\right],
\end{align}
where $E_q^{\text{sym}}$ is defined in \cref{eq:SymProbEq} in the main text. This provides the proof for \cref{eq:SymReduce} in the main text, which indicates the $O(1/M)$ reduction of $\eta_C$, which is the part of the error that does not commute of $O$. This allows us to consider the unitary error term $E_q^{\text{sym}}$ as a function of the number of symmetry transformations $M$.
Fig.~\ref{fig:symnum}(d) shows that the NUSC error is suppressed as $M$ grows, similar to the Haar-random case in Fig.~\ref{fig:symnum}(a-c) and approaches an asymptotic value. 
Fig.~\ref{fig:symnum}(d) demonstrates that, if the symmetry group does not coincide with the permutation group of Hamiltonian terms, it is beneficial to average the NUSC over a large number $M$ of symmetry transformations.

\end{document}